\newtheoremstyle{indented}
  {3pt}
  {3pt}
  {\itshape\addtolength{\@totalleftmargin}{0.5em}
   \addtolength{\linewidth}{-0.5em}}
  {}
  {\bfseries}
  {.}
  {.5em}
  {}
\newcounter{counter}
\theoremstyle{definition}
\newtheorem{definition}[counter]{Definition}
\theoremstyle{indented}
\newtheorem{theorem}[counter]{Theorem}
\newtheorem{lemma}[counter]{Lemma}
\newtheorem{corollary}[counter]{Corollary}
\newcommand{\half}{\frac{1}{2}}
\newcommand{\sleq}{\sqsubseteq}
\title{Ordering states and channels based on positive Bayesian evidence}
\author{John van de Wetering\\\small{Radboud University}\\\small\texttt{wetering@cs.ru.nl}}
\date{May 2017}
\begin{document}

\maketitle

\begin{abstract}
In this paper we introduce a new partial order on quantum states that considers which states can be achieved from others by updating on `agreeing' Bayesian evidence. We prove that this order can also be interpreted in terms of minimising worst case distuinguishability between states using the concept of quantum max-divergence. This order solves the problem of which states are optimal approximations to their more pure counterparts and it shows in an explicit way that a proposed quantum analogue of Bayes' rule leads to a Bayesian update that changes the state as little as possible when updating on positive evidence. We prove some structural properties of the order, specifically that the order preserves convex mixtures and tensor products of states and that it is a domain. The uniqueness of the order given these properties is discussed. Finally we extend this order on states to one on quantum channels using the Jamio\l{}kowski isomorphism. This order turns the spaces of unital/non-unital trace-preserving quantum channels into domains that, unlike the regular order on channels, is not trivial for unital trace-preserving channels.
\end{abstract}

\section{Introduction}
A common way to get an order of purity on quantum states is to use majorization. It provides a way of looking at purity in terms of which states can be converted into each other \cite{alberti1982stochasticity}. This order does not care about similarity of states: two states can be equivalent in terms of majorization yet their overlap (fidelity) can be zero. If we are interested in which states can be seen as `more pure versions' of other states we need some other order. That is the topic of this paper.

There are quite a few ways to measure similarity and differences for quantum states and channels. Examples include the fidelity, the trace distance or the relative von Neumann entropy. These distances are often based on average case behaviour of states. A candidate for a difference measure based on worst case difference between states and channels is the quantum max-divergence which is attained as a limit of the quantum R\'enyi divergence measures. Operational interpretations of these divergences have been demonstrated \cite{quantumrenyi,muller2013quantum,mosonyi2015quantum}.

We will show that the order in this paper can be understood by looking at what it means to gain `positive' evidence in a Bayesian context: evidence that increases your certainty. In doing so we provide an application of the quantum Bayes' rule used by Fuchs, Leifer and Spekkens \cite{fuchs2002quantum,leifer2013towards}. The order also turns out to be equivalent to minimising the quantum max-divergence between states. This shows that Bayesian updating to positive evidence changes the state as minimally as possible in this sense.

In the theory of computation the notion of a domain is important \cite{scottbook2003, martin2008domain,martinphd}. It provides a framework in which the convergence of computation can be understood. The order we will be studying is a domain structure on the space of quantum states. For qubits this structure coincides with the spectral order for quantum states introduced by Coecke and Martin \cite{coecke2010partial}. The order in this paper is not monotone over all (or just unital) completely positive trace preserving (CPTP) maps. By a simple argument we will show that any `nice' enough partial order on the space of states cannot be monotone over all channels.

The order on states is given by the expression
$$\rho \sleq \sigma \iff \frac{\rho}{\norm{\rho}} \geq \frac{\sigma}{\norm{\sigma}}$$
where $\rho,\sigma\in M_n$ are $n$-dimensional density matrices and $\norm{\rho}$ is the operator norm of the matrix. In section 2 we show that this order establishes the relation of states $\sigma$ that can be obtained from a state $\rho$ by updating your beliefs on `positive' Bayesian evidence using the quantum Bayes' rule advocated by Fuchs, Leifer and Spekkens \cite{fuchs2002quantum,leifer2013towards}:
$$ \rho \sleq \sigma \iff \exists E: \sigma = \frac{\rho^\half E \rho^\half}{\tr(E\rho)} $$
where $E$ is an effect that needs to satisfy a condition related to the positive nature of the evidence. In section 3 we establish some desirable properties of the order. In particular that it is preserved by convex mixtures and tensor products of states, and that it turns the space of states into a domain. We also show a uniqueness result.  An operational interpretation of the order in terms of minimising worst-case (i.e.\ single-shot) distuinguishability is established in section 4:

\noindent\hspace{1em}\begin{minipage}[h!]{\columnwidth-1em}
\vspace{0.5em}
\emph{Suppose we want to construct a state $\sigma$, but we can only achieve a maximum purity of $M\geq H_\infty(\sigma)$ as measured in terms of min-entropy. What state $\rho$ should we construct to minimise the worst case difference between $\sigma$ and $\rho$ as measured in terms of quantum max-divergence?}

\emph{Answer}: Pick a $\rho$ such that $H_\infty(\rho)=M$ and $\rho\sleq \sigma$.
\vspace{0.5em}
\end{minipage}
In section 5 the order is extended to the space of quantum channels. This extension is also a domain structure and is a slight modification of the standard order on von Neumann algebra's \cite{cho2016semantics} with the benefit that it is actually non-trivial on unital channels, but with the drawback that it is no longer monotone over all maps. Finally in section 6 we briefly discuss other orderings of states.

\section{Positive Evidence}
We will start with classical states to gain some intuition. A classical state is simply a probability distribution. Consider the following situation:

We have $n$ boxes and one of them contains some prize money. We have a probability distribution $x$ over these $n$ boxes that represents our knowledge about which of the boxes is likely to contain the prize. A complete lack of knowledge would then be represented by the uniform distribution $x=\bot_n = \frac{1}{n}I_n$ and perfect knowledge would be some pure distribution $P_i$ that is 1 at $i$ and zero everywhere else. If we had to pick a box we would pick the one that we attach the highest probability of winning to. Without loss of generality we take this to be box 1.

Now suppose we gain some evidence, for instance a friend comes along that has played this game many times and knows about the boxes. We would then update our beliefs using Bayes' rule to some new probability distribution. Given that we attach a nonzero prior probability to each box, by updating we could be left with an arbitrary probability distribution after gaining this evidence. In particular we could now be \emph{less} certain about which box to take. However, we are guaranteed to become more certain about our choice if our knowledgeable friend agrees with our prior pick. We will call this kind of evidence \emph{positive evidence}. This is the type of evidence that makes you more certain about your beliefs.

Concretely we can write the prior probabilities as $P(B=i)=x_i$, for which we know that $x_1\geq x_i$ for all $i$. The evidence is $P(E\lvert B=i)=p_i$ and if it is positive evidence than it should agree with our pick so we should have $p_1\geq p_i$. Now our updated probability distribution is
$$
y_i = P(B=i\lvert E) = \frac{P(E\lvert B=i)P(B=i)}{P(E)} = \frac{p_ix_i}{C}
$$
where $C = \sum_i p_ix_i$ is the normalisation constant. Since we gained some positive evidence, $y$ is a more certain (more pure) probability distribution. Now we ask the question, given a probability distribution $x$, what kind of probability distributions $y$ can be reached from $x$ with some positive evidence?

To answer that question we solve $y_i = \frac{p_ix_i}{C}$ assuming that $x_1\geq x_i$ and $p_1\geq p_i$. $C$ is fixed by noting that $C = p_1 \frac{x_1}{y_1}$. Filling it in we then get $y_i = \frac{p_i}{p_1}\frac{x_i}{x_1}y_1$ so that
$$
\frac{y_i}{y_1} = \frac{p_i}{p_1}\frac{x_i}{x_1}.
$$
Because $p_1\geq p_i$ we can find a set of $0\leq p_i\leq 1$ such that this equality holds if and only if 
$$
\frac{y_i}{y_1} \leq \frac{x_i}{x_1}\quad \text{for all } 1\leq i \leq n.
$$
Instead of assuming that $x_1$ is the biggest component we will use $x^+$ to denote the value of the biggest component. This construction now gives us a partial order:
$$
x \sleq y \iff \frac{y_i}{y^+} \leq \frac{x_i}{x^+} \quad \text{for all } 1\leq i \leq n.
$$

Let us now try to generalise this to quantum states. We will modify the game a bit by assuming there is some $n$-dimensional quantum particle. The information that we have about this particle is encoded in some quantum state $\rho\in M_n(\mathbb{C})$, e.g., $\rho$ is a positive operator and trace-normalised: $\tr(\rho)=1$. Now suppose the game is such that we can measure the particle using any pure measurement and we win if we guess the correct outcome. What kind of measurement should we do to maximise the chance of winning? This question boils down to finding a 1-dimensional projector $P$ such that $\tr(P \rho)$ (the chance of observing $P$ via the Born-rule) is maximised. This is the case when $P$ is a projector corresponding to the highest eigenvalue of $\rho$, similarly to the classical case where we had to pick the component with the highest probability. Agreeing `quantum evidence' can therefore be seen as evidence that preserves the highest eigenvalue.

Now comes the question of what we mean by quantum evidence and what the right quantum Bayes' rule is. A natural choice for encoding evidence would be to use efects, which are positive operators $E$ below the identity: $0\leq E\leq 1$. In the classical case we would then write the Bayesian update as $\rho \mapsto \frac{E\rho}{\tr(E\rho)}$, but of course when $E$ and $\rho$ don't commute, this doesn't necessarily result in a positive operator. A solution to this is to `sandwich' the operators, which is known as the sequential product:
$$\rho \mapsto \frac{E^\half\rho E^\half}{\tr(E\rho)}.$$
This version of the quantum Bayes' rule is the standard generalisation of the projection postulate. With this definition of Bayes' rule in hand we can try to create an order on states. Let $L^+(\rho)$ denote the eigenspace corresponding to the highest eigenvalue of $\rho$. Define
\begin{equation*}
 \rho \sleq^\prime \sigma \iff\begin{aligned} &\exists E: \sigma = \frac{E^\half \rho E^\half}{\tr(E\rho)} \\
 &\text{with } L^+(\rho)\cap L^+(E) \neq \{0\}\end{aligned}
 \end{equation*}
where this condition on the subspaces of the highest eigenvalues means that there is a vector $v\neq 0$ such that $\rho v = \rho^+ v$ and $E v = E^+ v$ which we have seen is the quantum analogue of the `agreeing condition' that $p_1\geq p_i$ and $x_1\geq x_i$.

While this relation is reflexive and antisymmetric, transitivity fails. To see this, note that if we can write $\rho^\prime = E^\half\rho E^\half$ and $\rho^{\prime\prime} = F^\half\rho^\prime F^\half$ then there is no guarantee that we can find an effect $G$ such that $\rho^{\prime\prime}$ can be obtained from $\rho$ because the composition $F^\half E^\half$ is not in general an effect. In fact, if we were to take the transitive closure of this relation then we would need to consider all updates of the form $\rho \mapsto \left(U^\dagger E^\half \rho E^\half U\right)/\tr(E\rho)$ where $U$ is an arbitrary unitary that leaves $L^+(\rho)$ intact. This transitive closure is no longer antisymmetric.

There is another proposal for a quantum Bayes' rule that is advocated by, for instance, Fuchs \cite{fuchs2002quantum} and Leifer and Spekkens \cite{leifer2013towards}. They propose that the correct formulation is
$$\rho \mapsto \frac{\rho^\half E \rho^\half}{\tr(\rho E)}.$$
This turns out to be the correct update rule for this problem which we will show after defining the quantum version of the classical order above.
\begin{definition}
Let $\rho,\sigma \in DO(n) = \{A \in M_n(\mathbb{C})~;~A\geq 0, \tr A = 1\}$ be states and denote the highest eigenvalue of $\rho$ as $\rho^+$ (which is just equal to the operator norm $\norm{\rho}$). Define the \emph{quantum positive evidence order} (QPE order) as
$$
 \rho \sleq \sigma \iff \frac{\sigma}{\sigma^+} \leq \frac{\rho}{\rho^+} \iff \sigma^+\rho - \rho^+\sigma \geq 0.
$$
This last expression will be referred to as the \emph{order inequality}.
\end{definition}
\begin{lemma}
\label{lem:basic}
Let $\rho$ and $\sigma$ be states in $M_n$ and suppose that $\rho\sleq \sigma$. Denote the linear subspace of the eigenvectors corresponding to the highest eigenvalue of a state $\rho$ by $L^+(\rho)$, then the following hold.
\begin{enumerate}
\item $\norm{\rho}=\rho^+\leq \sigma^+= \norm{\sigma}$.
\item If $\rho^+=\sigma^+$ then $\rho=\sigma$.
\item $L^+(\sigma)\subseteq L^+(\rho)$.
\item If $\rho v = 0$ then $\sigma v = 0$.
\item $\sleq$ is a partial order: reflexive, transitive and antisymmetric.
\end{enumerate}
\end{lemma}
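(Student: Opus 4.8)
The plan is to route every part through the single positive operator $A := \sigma^+\rho - \rho^+\sigma$, which by the order inequality satisfies $A \geq 0$, and to extract each claim by feeding $A$ into either the trace or a suitable quadratic form. Throughout I would use that the top eigenvalues $\rho^+,\sigma^+$ are strictly positive (a density matrix is nonzero), so dividing by them is legitimate.

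For parts 1 and 2 I would take the trace. Since $\rho$ and $\sigma$ are trace-normalised, $\tr(A) = \sigma^+ - \rho^+$, and $A \geq 0$ forces $\tr(A) \geq 0$, giving $\rho^+ \leq \sigma^+$. If moreover $\rho^+ = \sigma^+$ then $\tr(A) = 0$; a positive semidefinite operator of zero trace is itself zero, so $A = 0$, and dividing $\sigma^+\rho = \rho^+\sigma$ by the nonzero scalar $\sigma^+$ yields $\rho = \sigma$.

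For parts 3 and 4 I would instead evaluate the quadratic form $\langle v, Av\rangle$ on the relevant eigenvectors, using that for $A \geq 0$ one has $\langle v, Av\rangle = 0 \iff Av = 0$. For part 3, take $v \in L^+(\sigma)$, so $\sigma v = \sigma^+ v$; then $\langle v, Av\rangle = \sigma^+\langle v,\rho v\rangle - \rho^+\sigma^+\norm{v}^2 \leq 0$ by the variational bound $\langle v,\rho v\rangle \leq \rho^+\norm{v}^2$, while $A\geq 0$ forces it to be $\geq 0$; hence it vanishes, $Av = 0$, and $\rho v = \rho^+ v$, i.e.\ $v \in L^+(\rho)$. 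Part 4 is the same computation with $\rho v = 0$: then $\langle v, Av\rangle = -\rho^+\langle v,\sigma v\rangle$, which is $\geq 0$ only if $\langle v,\sigma v\rangle = 0$, and positivity of $\sigma$ gives $\sigma v = 0$.

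The partial-order claim in part 5 then assembles the pieces. Reflexivity is immediate since $\rho^+\rho - \rho^+\rho = 0 \geq 0$, and antisymmetry follows by applying part 1 in both directions to get $\rho^+ = \sigma^+$, whence part 2 gives $\rho = \sigma$. The one step that genuinely failed for the naive relation $\sleq^\prime$ was transitivity, so I expect it to be the place to watch; but here it becomes trivial once I use the equivalent normalised formulation $\frac{\sigma}{\sigma^+} \leq \frac{\rho}{\rho^+}$: chaining $\frac{\tau}{\tau^+} \leq \frac{\sigma}{\sigma^+}$ and $\frac{\sigma}{\sigma^+} \leq \frac{\rho}{\rho^+}$ and invoking transitivity of the Loewner order yields $\frac{\tau}{\tau^+} \leq \frac{\rho}{\rho^+}$, i.e.\ $\rho \sleq \tau$. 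I anticipate no serious obstacle; the only real care points are the two facts about positive operators (zero trace or a zero quadratic form forcing the operator or a vector to be annihilated) and correctly invoking the variational characterisation of the top eigenvalue in part 3.
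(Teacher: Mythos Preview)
Your proof is correct and follows essentially the same route as the paper's: trace of the order inequality for parts 1 and 2, sandwiching by $v^\dagger(\cdot)v$ for parts 3 and 4, and reading reflexivity and transitivity directly off the normalised form with antisymmetry coming from 1 and 2. You simply supply more detail (notably the variational-bound step in part~3 and the zero-trace/zero-quadratic-form facts) where the paper leaves things implicit.
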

\begin{proof}
The first two follow by taking the trace of the order inequality. The third follows by enclosing the order inequality by $v^\dagger(\_)v$ for a $v\in L^+(\sigma)$ and the fourth point follows from just plugging a $v$ from the kernel of $\rho$ into the order inequality. Reflexivity and transitivity should be clear from the definition and antisymmetry follows by points 1 and 2.
\end{proof}
\begin{theorem}
\label{theor:evidence}
	Let $\rho$ and $\sigma$ be states in $M_n$.
	$$\sigma\sleq\rho \iff \begin{aligned}&\exists E: \sigma = \frac{\rho^\half E \rho^\half}{\tr(E\rho)}\\
	&\text{with }L^+(E)\cap L^+(\rho) \neq \{0\}.\end{aligned}$$
\end{theorem}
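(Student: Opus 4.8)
The plan is to prove the two implications separately, using the order inequality of the Definition as the bridge: by that Definition the relation $\sigma\sleq\rho$ is equivalent to a single matrix inequality between $\rho$ and $\sigma$, so the whole statement reduces to passing between that inequality and the existence of a suitable effect $E$. Throughout I write $C=\tr(E\rho)=\tr(\rho^\half E\rho^\half)>0$.

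For the implication that assumes the update equation, the first and decisive step is to use the agreeing condition to pin down $\sigma^+$. Picking a unit vector $v\in L^+(E)\cap L^+(\rho)$ we have $\rho^\half v=\sqrt{\rho^+}\,v$ and $Ev=E^+v$ with $E^+=\norm{E}$, so a direct calculation gives $\sigma v=\tfrac1C\rho^\half E\rho^\half v=\tfrac{\rho^+E^+}{C}\,v$. The point that makes the argument work is that this eigenvalue is in fact the largest: for any unit $w$, setting $u=\rho^\half w$, one has $w^\dagger\sigma w=\tfrac1C\,u^\dagger Eu\le\tfrac{E^+}{C}\,w^\dagger\rho w\le\tfrac{\rho^+E^+}{C}$, using $E\le E^+1$ and $\rho\le\rho^+1$. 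Hence $\sigma^+=\rho^+E^+/C$ and $v\in L^+(\sigma)$. Substituting this value of $\sigma^+$ and using $\rho=\rho^\half 1\rho^\half$, the order inequality becomes $\tfrac{\rho^+}{C}\,\rho^\half(E^+1-E)\rho^\half\ge0$, which holds at once because $E\le E^+1$. This is the order relation between $\rho$ and $\sigma$.

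For the converse I would construct the effect explicitly. Since the assumed order inequality forces $\ker\rho\subseteq\ker\sigma$ (Lemma~\ref{lem:basic}, point 4), the operator $E:=\tfrac{\rho^+}{\sigma^+}\,\rho^{-\half}\sigma\rho^{-\half}$, defined through the pseudo-inverse of $\rho^\half$ on $\mathrm{supp}\,\rho$ and set to $0$ on $\ker\rho$, is a well-defined positive operator. Substituting it back gives $\rho^\half E\rho^\half=\tfrac{\rho^+}{\sigma^+}\sigma$ and $\tr(E\rho)=\tfrac{\rho^+}{\sigma^+}$, so the scalar cancels and the update reproduces $\sigma$ exactly. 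To check that $E$ is an effect I would conjugate the desired bound $E\le1$ by $\rho^\half$: on $\mathrm{supp}\,\rho$ it is equivalent to $\tfrac{\rho^+}{\sigma^+}\sigma\le\rho$, which is exactly the order inequality being assumed, and on $\ker\rho$ it is trivial. Finally, for the agreeing condition I would take $v\in L^+(\sigma)$; by Lemma~\ref{lem:basic}, point 3, $v\in L^+(\rho)$, and a short computation using $\rho^{-\half}v=(\rho^+)^{-\half}v$ and $\sigma v=\sigma^+v$ shows $Ev=v$, so $E^+=1$ and $v\in L^+(E)\cap L^+(\rho)$ is nonzero.

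The step I expect to be the main obstacle is the singular case in the converse: when $\rho$ is not invertible one has to justify that the pseudo-inverse construction yields an operator that is genuinely $\le1$ on all of $M_n$ and not merely on $\mathrm{supp}\,\rho$, and this is exactly where the containments $\ker\rho\subseteq\ker\sigma$ and $L^+(\sigma)\subseteq L^+(\rho)$ from Lemma~\ref{lem:basic} have to be used carefully. In the forward direction the delicate point is conceptual rather than computational: the agreeing condition $L^+(E)\cap L^+(\rho)\neq\{0\}$ is precisely what forces $\sigma^+=\rho^+E^+/C$, and without a shared top eigenvector the value of $\sigma^+$ is uncontrolled, so the clean reduction of the order inequality to $E\le E^+1$ would fail.
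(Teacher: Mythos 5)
Your proof is correct and takes essentially the same route as the paper's: in the forward direction you pin down $\sigma^+ = \rho^+E^+/\tr(E\rho)$ via the shared top eigenvector and reduce the order inequality to $\rho^\half(E^+ I_n - E)\rho^\half \geq 0$, and in the converse you construct the identical effect $E = \frac{\rho^+}{\sigma^+}\,\rho^{-\half}\sigma\rho^{-\half}$ on the support of $\rho$ and check $E\leq 1$ by conjugating the order inequality. The only differences are that you explicitly verify two points the paper leaves implicit — that $\rho^+E^+/\tr(E\rho)$ is indeed the \emph{largest} eigenvalue of $\sigma$, and that the constructed $E$ satisfies the agreeing condition $L^+(E)\cap L^+(\rho)\neq\{0\}$ (via $Ev=v$ for $v\in L^+(\sigma)$) — both of which are worthwhile additions rather than deviations.
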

\begin{proof}
	For the `only if' direction let $\sigma = \left(\rho^\half E \rho^\half\right)/\tr(E\rho)$ such that there exists a $v\neq 0$ with $\rho v = \rho^+ v$ and $E v = E^+ v$. Then we note that $\sigma v = \rho^+ E^+/\tr(E\rho) v$ is a maximum eigenvalue of $\sigma$. We need to show that $\sigma^+\rho - \rho^+ \sigma \geq 0$. Filling in the definition we get
	$$\frac{\rho^+ E^+ \rho}{\tr(E\rho)} - \frac{\rho^+ \rho^\half E \rho^\half}{\tr(E\rho)} \geq 0 \iff E^+ \rho - \rho^\half E \rho^\half \geq 0 \iff \rho^\half (E^+I_n - E)\rho^\half \geq 0$$
	and by using that $E\leq E^+ I$ we see that this is true.

	For the other direction we start with $\rho/\rho^+ \geq \sigma/\sigma^+$ and need to find the right $E$. Note that $E$ can be arbitrary outside of the support of $\rho$ so we may restrict to $\rho$'s support for which it has an inverse. Let
	$$E = \left(\frac{\rho}{\rho^+}\right)^{-\half}\frac{\sigma}{\sigma^+}\left(\frac{\rho}{\rho^+}\right)^{-\half}.$$
	Since $\sigma/\sigma^+ \leq \rho/\rho^+$ we indeed have $E\leq 1$. It is easily checked that this $E$ indeed gives $\sigma = \rho^\half E\rho^\half/\tr(E\rho)$.
\end{proof}

So while the standard projection postulate Bayes' rule doesn't define a good partial order, the Fuchs-Leifer-Spekkens (FLS) Bayes' rule does give the correct generalisation to the quantum case for this problem. As Leifer and Spekkens argued \cite{leifer2013towards} there probably won't be a single correct way to do Bayesian statistics in the quantum world, but here we have demonstrated a new problem where this version of the Bayes's rule is the correct generalisation.

\section{The QPE order}
First some general properties of the QPE order.

\begin{theorem}
\label{theor:properties}
Let $\rho$ and $\sigma$ be states in $M_n$. Let $\bot_n = \frac{1}{n}I_n$ denote the completely mixed state on $M_n$.
The following are true for the QPE order:
\begin{enumerate}
\item The completely mixed state is the bottom element: $\bot_n\sleq \rho$ for any $\rho$.
\item The pure states are maximal. A pure state $\ket{v}\bra{v}$ is above $\rho$ iff $v\in L^+(\rho)$.
\item The order is invariant under unitary conjugation: \\ $\rho\sleq \sigma$ iff $U\rho U^\dagger\sleq U\sigma U^\dagger$ for any unitary operator $U\in U(n)$. In fact, the order is invariant under application of any linear (trace preserving) isometry $\Phi:M_n\rightarrow M_k$.
\item When $\rho\sleq \sigma$ we have ker$(\rho)\subseteq $ ker$(\sigma)$ and rnk$(\rho)\geq $ rnk$(\sigma)$.
\item The convex structure of state space is preserved: $\rho\sleq \sigma$ iff $\rho\sleq (1-t)\rho + t\sigma \sleq \sigma$ for all $0\leq t\leq 1$.
\item Downsets are closed convex spaces. Uppersets are also closed and are unions of closed convex spaces\footnote{The QPE order is actually an example of a pospace: The graph induced by the order is closed in the space $DO(n)\times DO(n)$.}.
\item Let $\rho_i,\sigma_i\in DO(n_i)$ with $\rho_i\sleq \sigma_i$ for $i=1,2$, then $\rho_1\otimes \rho_2 \sleq \sigma_1\otimes \sigma_2$.
\item When $\rho\sleq \sigma$ we have $\lambda(\rho)\sleq \lambda(\sigma)$, where $\lambda(\rho)$ denotes the set of ordered eigenvalues of $\rho$.
\end{enumerate}
\end{theorem}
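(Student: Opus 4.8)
The plan is to reduce every item to manipulations of the order inequality $\sigma^+\rho - \rho^+\sigma \geq 0$ together with its normalised Loewner form $\sigma/\sigma^+ \leq \rho/\rho^+$, leaning on Lemma~\ref{lem:basic} throughout. Items 1--4 are essentially immediate. For item~1, $\bot_n/\bot_n^+ = I_n$ while $\rho/\rho^+ \leq I_n$ because the top eigenvalue of $\rho/\rho^+$ is $1$, so $\bot_n \sleq \rho$. Item~4 is just Lemma~\ref{lem:basic}(4) combined with $\mathrm{rnk} = n - \dim\ker$. For item~3 the mechanism is that conjugation by a unitary (and, more generally, an isometric embedding $\Phi(A)=VAV^\dagger$) preserves the spectrum, so $\Phi(\rho)^+ = \rho^+$, and reflects positivity in both directions ($VAV^\dagger \geq 0 \iff A \geq 0$, the reverse by sandwiching with $V$); applying $\Phi$ to the order inequality then yields the equivalence. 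Item~2 splits into maximality --- if $\ket{v}\bra{v}\sleq\tau$ then $1=\norm{\ket{v}\bra{v}}\leq\tau^+\leq 1$ forces $\tau^+=1$ and Lemma~\ref{lem:basic}(2) gives $\tau=\ket{v}\bra{v}$ --- and the criterion $\rho \sleq \ket{v}\bra{v} \iff \rho - \rho^+\ket{v}\bra{v} \geq 0$, which sandwiched by $v$ forces $\bra{v}\rho\ket{v} = \rho^+$, i.e.\ $v\in L^+(\rho)$, and conversely a top eigenvector makes the difference manifestly positive.

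The technical heart of the whole theorem is a single observation that drives items 5 and 6: if states $\rho_1,\dots,\rho_m$ share a common top eigenvector $v$ (so $\rho_j v = \rho_j^+ v$), then for any convex weights $t_j$ the combination $\tau = \sum_j t_j \rho_j$ satisfies $\tau^+ = \sum_j t_j \rho_j^+$ with $v\in L^+(\tau)$. The bound $\tau^+ \leq \sum_j t_j\rho_j^+$ is the triangle inequality for $\norm{\cdot}$, and the reverse holds because $v$ is an eigenvector of $\tau$ with that eigenvalue. Once $\tau^+$ is pinned down, the otherwise nonlinear order inequality becomes linear in the $\rho_j$. For item~5, writing $\tau_t = (1-t)\rho + t\sigma$ and using $L^+(\sigma)\subseteq L^+(\rho)$ from Lemma~\ref{lem:basic}(3) gives $\tau_t^+ = (1-t)\rho^+ + t\sigma^+$, whereupon $\tau_t^+\rho - \rho^+\tau_t = t(\sigma^+\rho - \rho^+\sigma)$ and $\sigma^+\tau_t - \tau_t^+\sigma = (1-t)(\sigma^+\rho - \rho^+\sigma)$, both $\geq 0$; the converse is trivial (take $t=1$).

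Item~6 uses the same device plus continuity. Closedness of every up- and down-set, and of the order graph itself (the pospace footnote), follows because $(\rho,\sigma)\mapsto \sigma^+\rho - \rho^+\sigma$ is continuous --- $\rho^+=\norm{\rho}$ is continuous --- and the positive cone is closed. For convexity of the downset $\{\rho : \rho\sleq\sigma\}$, take $\rho_1,\rho_2\sleq\sigma$; then $L^+(\sigma)$ lies in both $L^+(\rho_1)$ and $L^+(\rho_2)$, the norm-additivity observation applies to any mixture $\mu$, and the order inequality for $\mu\sleq\sigma$ splits as a convex combination of those for $\rho_1,\rho_2$. Up-sets need not be convex because $L^+(\sigma)$ may vary, so I would write $\uparrow\!\rho = \bigcup_{v}S_v$ over unit vectors $v\in L^+(\rho)$, with $S_v = \{\sigma : \rho\sleq\sigma,\ v\in L^+(\sigma)\}$; each $S_v$ is closed (the constraint $\sigma v = \norm{\sigma}\,v$ is closed) and convex (norm-additivity again, fixing the shared eigenvector $v$), and the union covers $\uparrow\!\rho$ since any $\sigma$ above $\rho$ has a top eigenvector inside $L^+(\rho)$.

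Finally, items 7 and 8 invoke two standard matrix facts. For tensor products, multiplicativity of the operator norm, $\norm{A\otimes B}=\norm{A}\norm{B}$, reduces the claim to $B_1\otimes B_2 \leq A_1\otimes A_2$ where $A_i=\rho_i/\rho_i^+$ and $B_i=\sigma_i/\sigma_i^+$ satisfy $0\leq B_i\leq A_i$; the telescoping identity $A_1\otimes A_2 - B_1\otimes B_2 = A_1\otimes(A_2-B_2) + (A_1-B_1)\otimes B_2$ and the positivity of tensor products of positive operators finish it. For item~8, the normalised order inequality is exactly $\sigma/\sigma^+ \leq \rho/\rho^+$, and Weyl monotonicity (from Courant--Fischer min--max) gives $\lambda_i(\sigma)/\sigma^+ \leq \lambda_i(\rho)/\rho^+$ for every ordered eigenvalue index $i$, which is precisely the classical QPE relation $\lambda(\rho)\sleq\lambda(\sigma)$. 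I expect the main obstacle to be item~6: controlling the top eigenvalue under mixing and recognising that up-sets decompose into convex pieces indexed by the top eigenvector. The norm-additivity lemma is the one genuinely new ingredient, and everything else is substitution or citation.
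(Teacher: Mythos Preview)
Your proposal is correct and follows essentially the same route as the paper: items 1--4 via Lemma~\ref{lem:basic} and the obvious norm values, items 5--6 via the ``norm-additivity under a shared top eigenvector'' observation (which is exactly the paper's mechanism), item~7 via multiplicativity of $\norm{\cdot}$, and item~8 via Weyl monotonicity. The only cosmetic differences are that you telescope directly in item~7 where the paper bootstraps through $\rho\otimes\kappa\sleq\sigma\otimes\kappa$, and you argue closedness in item~6 directly from continuity of $(\rho,\sigma)\mapsto\sigma^+\rho-\rho^+\sigma$ rather than via the paper's map $F:DO(n)\to PO(n)$; both are equivalent.
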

\begin{proof}
We note that $\bot_n^+=\frac{1}{n}$, $(\ket{v}\bra{v})^+=1$ and $\Phi(A)^+=\norm{\Phi(A)} = \norm{A}=A^+$ when $\Phi$ is an isometry. With the help of Lemma \ref{lem:basic}, the first 4 points then follow easily. The fifth points follows because when $\rho\sleq \sigma$ there is a nonzero vector $v$ such that $\sigma v = \sigma^+v$ and $\rho v = \rho^+ v$ by Lemma \ref{lem:basic}. It then follows that $((1-t)\rho + t\sigma)^+= (1-t)\rho^++t\sigma^+$ from which statement 5 follows easily. 

For statement 6, let $\rho_i\sleq\sigma$ for $i=1,2$. Then $L^+(\sigma)\subseteq L^+(\rho_1)\cap L^+(\rho_2)$ from which it follows that $((1-t)\rho_1+t\rho_2)^+ = (1-t)\rho_1^+ + t\rho_2^+$. By writing out the order inequality it is easily seen that $(1-t)\rho_1+t\rho_2\sleq\sigma$, so that downsets are indeed convex sets. When $\rho\sleq\sigma_i$, $i=1,2$ and $L^+(\rho)$ is 1-dimensional we can use the same argument to show that the upperset of $\rho$ is convex. If the dimension of $L^+(\rho)$ is bigger than 1 then each normalised $v\in L^+(\rho)$ corresponds to some convex subset of its upperspace and the upperspace is the union of these convex subsets. That these spaces are closed follows because the order is induced by a continuous map $F:DO(n)\rightarrow PO(n)$ and the order on $PO(n)$ also has closed upper- and downsets.

For the 7th statement we note that $(\sigma_1\otimes\sigma_2)^+ = \norm{\sigma_1\otimes \sigma_2} = \norm{\sigma_1}\norm{\sigma_2} = \sigma_1^+\sigma_2^+$. By first showing that $\rho\otimes \kappa \sleq \sigma \otimes \kappa$ iff $\rho\sleq \sigma$ for any state $\kappa$ it then easily follows.

The last point is true because for the normal positivity order we have $A\leq B$ implying $\lambda(A)\leq \lambda(B)$.
\end{proof}

When studying computation in a formal context such as when constructing semantics, the structure that is often required is that of a \emph{domain} \cite{scottbook2003}, which is a special type of order structure:
\begin{definition}
Let $(X,\leq)$ be a partially ordered set. $S\subseteq X$ is called \emph{upwards directed} when for all $x,y \in S$ we can find $z\in S$ such that $x,y\leq z$. $(X,\leq)$ is called \emph{directed complete} when any upwards directed set $S$ has a supremum $\vee S$. We say that $x\ll y$ ($x$ \emph{way below} $y$) when for any upwards directed set $S$ such that $y\leq \vee S$ we can find $s\in S$ such that $x\leq s$. We call $(X,\leq)$ a \emph{domain} when it is directed complete and the set of all elements way below $x$ is a directed set with supremum $x$ for all $x\in X$.
\end{definition}
\begin{theorem}
The QPE order is a domain on the state space $DO(n)$ for all $n\geq 1$. Furthermore we have
\begin{itemize}
    \item $(1-t)\rho + t\bot_n \ll \rho$ for any $0<t\leq 1$,
    \item When $\rho \ll \sigma$ then $(1-t)\sigma + t\rho \ll \sigma$ for $0<t \leq 1$,
    \item For all $\rho$ there exists a $\sigma$ such that $\rho \ll \sigma$ if and only if ker$(\rho) = \{0\}$.
\end{itemize}
\end{theorem}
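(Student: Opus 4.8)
The plan is to transport the entire problem through the order-reversing bijection $F\colon DO(n)\to\mathcal A$, $F(\rho)=\rho/\rho^+$, onto $\mathcal A=\{A\in M_n~;~A\geq 0,\ \norm A=1\}$ that already appears in Theorem~\ref{theor:properties}. By definition $\rho\sleq\sigma\iff F(\sigma)\leq F(\rho)$ in the L\"owner order, $F$ is a homeomorphism with inverse $G(A)=A/\tr A$, and it sends upward-directed sets to downward-directed ones and suprema to infima. So I would first prove \emph{directed completeness}: given an upward-directed $S$, the net $F(S)$ is downward-directed and bounded below by $0$, hence by monotone convergence of Hermitian matrices it converges in norm to its L\"owner infimum $B\geq 0$, and $\vee S$ ought to be $G(B)$. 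The one non-formal point --- and the main obstacle --- is checking that $B\in\mathcal A$, i.e.\ $\norm B=1$, so that $G(B)$ is defined and lands back in state space. Here I would use finite-dimensionality: by Lemma~\ref{lem:basic}(3) the subspaces $\{L^+(\rho)~;~\rho\in S\}$ are downward directed under inclusion, so a member of least dimension is contained in all the others and yields a nonzero $N_0$ with $N_0\subseteq L^+(\rho)$ for every $\rho\in S$; any unit $v\in N_0$ then satisfies $F(s)v=v$ for all $s$, whence $Bv=\lim_s F(s)v=v$ and $\norm B=1$.

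For the first bullet I would compute $F((1-t)\rho+t\bot_n)=\alpha F(\rho)+(1-\alpha)I$ with $\alpha=\tfrac{(1-t)\rho^+}{(1-t)\rho^++t/n}\in[0,1)$, so that it exceeds $F(\rho)$ exactly by $(1-\alpha)(I-F(\rho))$, which is strictly positive off $L^+(\rho)$. To show this state is way below $\rho$, take any upward-directed $S$ with $\rho\sleq\vee S$, put $B=\inf F(S)\leq F(\rho)$, and let $N=\ker(I-B)$ be the eigenvalue-$1$ space of $B$. Squeezing $B\leq F(s)\leq I$ and $B\leq F(\rho)\leq I$ shows that every $F(s)$, as well as $F(\rho)$ and $F((1-t)\rho+t\bot_n)$, restricts to the identity on $N$ and is block-diagonal for $N\oplus N^\perp$. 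On $N^\perp$ one has $B\leq(1-\delta)I$ for some $\delta>0$, and the elementary estimate
\[
F((1-t)\rho+t\bot_n)|_{N^\perp}-B|_{N^\perp}
=\alpha(F(\rho)-B)|_{N^\perp}+(1-\alpha)(I-B)|_{N^\perp}
\geq(1-\alpha)\delta\,I
\]
gives a uniform positive gap; since $F(s)\to B$ in norm this forces $F(s)\leq F((1-t)\rho+t\bot_n)$ for some $s\in S$, i.e.\ $(1-t)\rho+t\bot_n\sleq s$, which is exactly $(1-t)\rho+t\bot_n\ll\rho$.

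With the first bullet in hand the domain property is almost free. For fixed $\rho$ the family $C_\rho=\{(1-t)\rho+t\bot_n~;~0<t\leq 1\}$ is a $\sleq$-chain (hence directed) lying below $\rho$ by Theorem~\ref{theor:properties}(5), and since any upper bound $\mu$ of $C_\rho$ satisfies $F(\mu)\leq\alpha(t)F(\rho)+(1-\alpha(t))I\to F(\rho)$ as $t\to0$, we get $\vee C_\rho=\rho$. Because every $x\ll\rho$ must, by the definition of way-below applied to the directed set $C_\rho$, lie below some element of the chain --- an element that is itself way below $\rho$ --- the set $\{x~;~x\ll\rho\}$ is directed with supremum $\rho$; together with directed completeness this is exactly the assertion that $(DO(n),\sleq)$ is a domain.

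Finally I would isolate the characterization
\[
\rho\ll\sigma\iff \rho\sleq\sigma\ \text{and}\ (F(\rho)-F(\sigma))|_{L^+(\sigma)^\perp}\ \text{is positive definite},
\]
whose `if' direction is the $N\oplus N^\perp$ argument above (the gap now coming from positive-definiteness on $L^+(\sigma)^\perp$ together with $B\leq(1-\delta)I$ on the remaining directions), and whose `only if' direction follows by tilting a decreasing net towards any null direction $w\perp L^+(\sigma)$ on which $F(\rho)-F(\sigma)$ vanishes. The second bullet is then immediate, since $F((1-t)\sigma+t\rho)-F(\sigma)=(1-\beta)(F(\rho)-F(\sigma))$ scales the same positive-definite form by $1-\beta>0$ while $L^+((1-t)\sigma+t\rho)=L^+(\sigma)$. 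For the third bullet, if $\ker\rho=\{0\}$ then $F(\rho)$ is positive definite and any pure state $\ket v\bra v$ with $v\in L^+(\rho)$ satisfies $\rho\ll\ket v\bra v$, because on $v^\perp=L^+(\ket v\bra v)^\perp$ the difference $F(\rho)-\ket v\bra v$ equals the positive-definite $F(\rho)$; conversely a nonzero $u\in\ker\rho$ lies in $\ker\sigma\subseteq L^+(\sigma)^\perp$ by Lemma~\ref{lem:basic}(4) and kills $F(\rho)-F(\sigma)$, so no $\sigma$ meets the characterization. The chief difficulty throughout is the single point flagged above --- guaranteeing $\norm B=1$ so that infima computed in $PO(n)$ stay inside the state space --- while everything else reduces to bookkeeping with the $N\oplus N^\perp$ splitting.
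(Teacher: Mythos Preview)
Your argument is correct and takes a genuinely different route from the paper. The paper establishes directed completeness by invoking general facts about compact pospaces (closed upper- and downsets plus compactness of $DO(n)$), then proves $(1-t)\rho+t\bot_n\ll\rho$ by working directly with the order inequality $\norm{\sigma_j}\rho-\norm{\rho}\sigma_j\geq 0$: it first argues that $L^+(\sigma_j)$ must eventually stabilise to $L^+(\sigma)$, then adds and subtracts $\norm{\sigma}\rho-\norm{\rho}\sigma$ to exhibit a term going uniformly to zero plus a term that is strictly positive off $L^+(\sigma)$. The remaining bullets are handled tersely --- the second by remarking that $\bot_n$ can be replaced by any $\rho'\ll\rho$, and the third via the extrapolation $z(\lambda)=\lambda\rho+(1-\lambda)\bot_n$ for $\lambda>1$ together with an ad hoc sequence $a_j\to P$ with trivial kernel.

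Your approach instead transports everything through $F(\rho)=\rho/\rho^+$ into the L\"owner order on $\mathcal A$, reducing directed completeness to ordinary monotone convergence of matrices (with the one genuine obstacle, that $\norm{\inf F(S)}=1$, handled by the finite-dimensional stabilisation of the $L^+$-spaces). The $N\oplus N^\perp$ splitting then gives a uniform mechanism for the way-below arguments, and you extract the clean characterisation $\rho\ll\sigma\iff\rho\sleq\sigma$ and $(F(\rho)-F(\sigma))|_{L^+(\sigma)^\perp}$ is positive definite, from which both remaining bullets fall out by inspection. This is more self-contained than the paper's appeal to domain-theoretic background, and the explicit $\ll$-characterisation is a bonus the paper does not state. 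The only place your write-up is thin is the ``only if'' half of that characterisation (``tilting a decreasing net towards a null direction $w$''): the intended construction $F(\sigma_\epsilon)=F(\sigma)+\epsilon\ket w\bra w$ works, but deserves one explicit sentence.
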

\begin{proof}
Since the uppersets and downsets are closed subsets in the topology induced by the operator norm, by general consideration of a compact topological pospace, $\sleq$ is then directed complete and any increasing sequence will be convergent. It is then also enough to work with increasing sequences instead of arbitary directed sets (for more details see \cite{scottbook2003,martinphd}) and to prove it is a domain we only need to find for any $\rho$ an increasing sequence of approximations of $\rho$ that converges to $\rho$. Let $\rho(t) = (1-t)\rho + t\bot_n$. We will show that $\rho(t)\ll \rho$ so that this is indeed such a convergent sequence of approximations.

Let $(\sigma_i)$ be an increasing sequence of elements. This is then convergent in the operator norm to some $\sigma$. We suppose $\rho\sleq \sigma$. For every $0<t\leq 1$ we now need to find a $j$ such that $\rho(t)\sleq \sigma_j$. First note that $L^+(\sigma)\subseteq L^+(\sigma_j)\subseteq L^+(\sigma_i)$ when $i\leq j$. For some $N$ we must have equality for all $i\geq N$: suppose this is not the case, then there is a normalised $v$ such that $\sigma_i v = \sigma_i^+ v$ for all $i$ while $\sigma v \neq \sigma^+ v$. Then $\norm{\sigma_i v} = \sigma_i^+$. Since we have $\sigma_i\rightarrow \sigma$ we also have $\sigma_i^+ \rightarrow \sigma^+$, so $\norm{\sigma_i v} \rightarrow \sigma^+$, but $\norm{\sigma v} \leq \sigma^+ - \delta$ for some $\delta>0$. $(\sigma_i)$ converges in the matrix norm so this is a contradiction. We will now assume that $L^+(\sigma)=L^+(\sigma_j)$, otherwise we could just take the tail of the sequence where this is the case. 
Writing out the order inequality of $\rho(t)\sleq \sigma_j$ we get
\begin{equation*}
    (1-t)(\norm{\sigma_j}\rho - \norm{\rho}\sigma_j) + t(\norm{\sigma_j}\bot_n - \frac{1}{n}\sigma_j) \geq 0.
\end{equation*}

We note that $L^+(\sigma_j)=L^+(\sigma)\subseteq L^+(\rho(t))$ so that when we fill in a $v\in L^+(\sigma_j)$ in this expression it is exactly zero and we note that for any other $v\neq 0$ $v^\dagger(\norm{\sigma_j}\bot_n - \frac{1}{n} \sigma_j)v > 0$ is strictly bigger than zero. Now by adding and substracting $(1-t)(\norm{\sigma}\rho - \norm{\rho}\sigma)$ to the above expression we get
\begin{align*}
   & (1-t)\left((\norm{\sigma_j}-\norm{\sigma})\rho - \norm{\rho}(\sigma_j-\sigma)\right) \\
   +& (1-t)(\norm{\sigma}\rho - \norm{\rho}\sigma) + \frac{t}{n}(\norm{\sigma_j}I-\sigma_j) \geq 0
\end{align*}
The first term goes uniformly to zero when $\sigma_j\rightarrow \sigma$, the second term is nonnegative, and the last term is strictly positive for any $v$ not in $L^+(\sigma)$, so we see that for large enough $j$ this expression will be a positive operator so that indeed $\rho(t)\sleq \sigma_j$ for some $j$.

The role of $\bot_n$ can be replaced by any $\rho^\prime \ll \rho$ which proves the second statement.

For the third statement we note that when ker$(\rho)\neq\{0\}$ we can for every state $\sigma$ construct some sequence $a_j$ such that $a_j\rightarrow P$ where $P$ is the projection corresponding to $L^+(\sigma)$ while ker$(a_j)=\{0\}$. Since $\rho\sleq a_j$ would imply ker$(\rho)\subseteq $ ker$(a_j)$ this is not possible proving that $\rho$ doesn't approximate anything. Suppose now for the other direction that ker$(\rho)=\{0\}$. Then let $z(\lambda) = \lambda \rho + (1-\lambda)\bot_n$. Then there is a $\lambda >1 $ such that $z(\lambda)$ is still positive. The above argument then shows that $\rho\ll z(\lambda)$.
\end{proof}
To define semantics it is also required that the domain structure is monotone over all quantum channels\footnote{In the language of category theory: that the order structure is an enrichment of the category.}. Such an order structure is not possible, primarily because performing a partial trace operation can change a pure state to a completely mixed state.

\begin{lemma}Let $(DO(nm),\leq)$ for $n,m\geq 2$ be an ordered space such that each $\rho\in DO(nm)$ is below the pure state corresponding to its highest eigenvalue. Let $(DO(n), \leq^\prime)$ be an ordered set with the completely mixed state as the bottom element. Then the partial trace $\tr_2: M_{nm}\rightarrow M_n$ does not restrict to a monotone map $\tr_2: (DO(nm),\leq)\rightarrow (DO(n),\leq^\prime)$.
\end{lemma}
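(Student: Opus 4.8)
The plan is to argue by contradiction: assume $\tr_2$ does restrict to a monotone map, so that $\rho \leq \sigma$ implies $\tr_2(\rho) \leq^\prime \tr_2(\sigma)$, and then exhibit a comparable pair whose images violate this. The engine of the counterexample is the familiar fact that the partial trace sends a maximally entangled pure state to the completely mixed state. Concretely I would fix a unit vector $\ket{\psi} = \frac{1}{\sqrt n}\sum_{i=1}^n \ket{i}\otimes\ket{i} \in \mathbb{C}^n\otimes\mathbb{C}^m$ and set $\sigma = \ket{\psi}\bra{\psi}$, so that $\tr_2(\sigma) = \bot_n$. This step needs the traced-out factor to be large enough to purify $\bot_n$ (Schmidt rank $n$, i.e.\ $m\geq n$); the case $m\geq n$, in particular $m=n\geq 2$, is the one this construction requires and already exhibits the obstruction.

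Next I would produce a state strictly below $\sigma$ in $\leq$ using only the single structural assumption available, namely that every state lies below the pure state of its highest eigenvalue. Choose a product vector $\ket{\phi} = \ket{1}\otimes\ket{2}$ orthogonal to $\ket{\psi}$ (this is where $m\geq 2$ and $n\geq 2$ enter) and put $\rho = p\,\sigma + (1-p)\ket{\phi}\bra{\phi}$ with $\half < p < 1$. Because $\ket{\phi}\perp\ket{\psi}$, the operator $\rho$ is already diagonal on $\mathrm{span}\{\ket{\psi},\ket{\phi}\}$ with eigenvalues $p > 1-p$, so its highest eigenvalue is simple with eigenvector $\ket{\psi}$; hence the pure state corresponding to the highest eigenvalue of $\rho$ is exactly $\sigma$, and the hypothesis on $(DO(nm),\leq)$ gives $\rho\leq\sigma$. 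On the other hand $\tr_2(\rho) = p\,\bot_n + (1-p)\ket{1}\bra{1}$, which is not a multiple of the identity and therefore differs from $\bot_n$.

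Finally I would close the contradiction: monotonicity together with $\tr_2(\sigma)=\bot_n$ forces $\tr_2(\rho)\leq^\prime \bot_n$, but $\bot_n$ is the bottom element of $(DO(n),\leq^\prime)$, so $\bot_n\leq^\prime \tr_2(\rho)$ as well, and antisymmetry yields $\tr_2(\rho)=\bot_n$ --- contradicting the previous paragraph. I expect the main obstacle to be the balancing act in the construction rather than the final deduction: one must simultaneously arrange that $\tr_2(\sigma)$ hits the bottom element exactly (forcing $\sigma$ to be maximally entangled, hence the dimension constraint), that the admixed $\ket{\phi}$ keeps $\ket{\psi}$ as the strict top eigenvector so that the weak hypothesis on $\leq$ actually delivers $\rho\leq\sigma$, and yet that this same admixture pushes the reduced state off the maximally mixed point. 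Getting all three to hold at once is the crux; everything else is bookkeeping.
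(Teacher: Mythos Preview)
Your proposal is correct and follows essentially the same line as the paper's proof: mix a maximally entangled pure state $P$ with weight $>\tfrac12$ against something orthogonal so that $P$ remains the pure state of the top eigenvector and the hypothesis yields $\rho\leq P$, then observe $\tr_2(P)=\bot_n$ while $\tr_2(\rho)\neq\bot_n$ and invoke minimality of $\bot_n$ plus antisymmetry. You are simply more explicit than the paper (a concrete product vector $\ket{1}\otimes\ket{2}$ in place of an unspecified orthogonal $A$), and you rightly flag the Schmidt-rank constraint $m\geq n$ needed for a pure state with $\tr_2=\bot_n$---a point the paper's proof silently assumes as well.
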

\begin{proof}
Let $P = \ket{M}\bra{M}$ be a maximally entangled state so that $\tr_2(P)=\bot_n$. Now let $\rho = tP + (1-t)A$ where $PA=AP=0$ and $t\geq \frac{1}{2}$. Then $L^+(\rho)=$ supp$(P)$ so that we must have $\rho\leq P$. If $\tr_2$ is monotone then we should now have $\tr_2(\rho)\leq^\prime \tr_2(P) = \bot_n$, but since $\bot_n$ is the least element in $DO(n)$ this is only possible when $\tr_2(\rho) = t\bot_n + (1-t)\tr_2(A) = \bot_n$. The only condition on $A$ was that it is orthogonal to $P$, so this won't be the case in general.
\end{proof}

Despite the order not being monotone over all channels, there are some non-pure channels that preserve the order. It is for instance preserved by depolarising noise.

\begin{definition}
    The \emph{depolarizing map} $D_t:M_n \rightarrow M_n$ is given by $D_t(\rho) = (1-t)\rho + t\bot_n$. When $t=1$ we will call it the \emph{completely depolarizing} map.
\end{definition}

\begin{lemma}
When $\kappa\sleq \rho\sleq \sigma$ we have $(1-t)\rho+t\kappa \sleq (1-t)\sigma + t\kappa$.
\end{lemma}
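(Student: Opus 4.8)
The plan is to unwind the definition and verify the order inequality directly. Writing $A = (1-t)\rho + t\kappa$ and $B = (1-t)\sigma + t\kappa$, the claim $A \sleq B$ is equivalent to $B^+ A - A^+ B \geq 0$, so the first task is to pin down the three relevant highest eigenvalues. From $\kappa \sleq \rho \sleq \sigma$ together with Lemma~\ref{lem:basic} I would record that $\kappa^+ \leq \rho^+ \leq \sigma^+$ and that $L^+(\sigma) \subseteq L^+(\rho) \subseteq L^+(\kappa)$, so the three states share a common top eigenvector $v$. Since $A \leq ((1-t)\rho^+ + t\kappa^+)I$ with equality achieved on $v$, this gives $A^+ = (1-t)\rho^+ + t\kappa^+$, and likewise $B^+ = (1-t)\sigma^+ + t\kappa^+$.

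Next I would expand $B^+ A - A^+ B$ and observe that the quadratic terms $t^2\kappa^+\kappa$ cancel, so the expression collapses to $(1-t)^2 R_1 + t(1-t)S$, where $R_1 = \sigma^+\rho - \rho^+\sigma \geq 0$ is the order inequality for $\rho \sleq \sigma$, and $S = (\sigma^+ - \rho^+)\kappa + \kappa^+(\rho - \sigma)$. For $t=1$ both sides equal $\kappa$ and there is nothing to prove, so for $0 \leq t < 1$ it suffices to show $(1-t)R_1 + tS \geq 0$; since $R_1 \geq 0$, the whole problem reduces to proving $S \geq 0$.

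This is the one nontrivial step, and I expect it to be the main obstacle, because the summand $\kappa^+(\rho - \sigma)$ has indefinite sign. The trick is to feed the two available order inequalities into one another. From $R_1 \geq 0$ and $\sigma^+ > 0$ one gets $\rho \geq (\rho^+/\sigma^+)\sigma$, hence $\kappa^+(\rho - \sigma) \geq -\kappa^+\frac{\sigma^+ - \rho^+}{\sigma^+}\sigma$, which yields $S \geq (\sigma^+ - \rho^+)\left(\kappa - \frac{\kappa^+}{\sigma^+}\sigma\right)$. By transitivity $\kappa \sleq \sigma$, and its order inequality $\sigma^+\kappa - \kappa^+\sigma \geq 0$ is exactly the statement $\kappa - \frac{\kappa^+}{\sigma^+}\sigma \geq 0$; combined with $\sigma^+ - \rho^+ \geq 0$ this gives $S \geq 0$.

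Assembling the pieces, $B^+ A - A^+ B = (1-t)\left[(1-t)R_1 + tS\right] \geq 0$ as a sum of positive operators with nonnegative coefficients, which is precisely $A \sleq B$. The only delicate point is the sign-bookkeeping in the reduction of $S$ — keeping straight which order inequality to invoke and in which direction to scale by $\sigma^+$ — while the eigenvalue computation and the algebraic collapse are routine.
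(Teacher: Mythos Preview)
Your proof is correct and follows essentially the same route as the paper: identical computation of the top eigenvalues via the shared eigenvector, identical algebraic collapse of $B^+A - A^+B$ to $(1-t)^2 R_1 + t(1-t)S$, and identical reduction to showing $S = (\sigma^+-\rho^+)\kappa + \kappa^+(\rho-\sigma) \geq 0$. The only difference is cosmetic: to bound $S$, the paper combines the order inequality for $\rho\sleq\sigma$ with that for $\kappa\sleq\rho$, whereas you combine $\rho\sleq\sigma$ with $\kappa\sleq\sigma$ (via transitivity); both choices finish the argument immediately.
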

\begin{proof}
Note that $((1-t)\rho + t\kappa)^+ = (1-t)\rho^+ + t\kappa^+$. Writing out the order inequality we must show that
\begin{align*}
    &\left((1-t)\sigma^+ +t\kappa^+\right)((1-t)\rho + t\kappa) \\
    - &\left((1-t)\rho^+ +t\kappa^+\right)((1-t)\sigma + t\kappa) \geq 0
\end{align*}
which by cancelling terms can be written as
\begin{equation*}
    (1-t)^2(\sigma^+\rho - \rho^+\sigma) + t(1-t)\left((\sigma^+-\rho^+)\kappa + \kappa^+(\rho - \sigma)\right) \geq 0.
\end{equation*}
We note that the first term is positive because $\rho\sleq \sigma$, so if we can show that the second term is positive we are done. That is, we must show that
\begin{equation*}
    0 \leq (\rho-\sigma) + (\sigma^+ - \rho^+)\frac{\kappa}{\kappa^+}.
\end{equation*}
We have
\begin{equation*}
    0 \leq \sigma^+\rho-\rho^+\sigma = \rho^+(\rho-\sigma) + (\sigma^+-\rho^+)\rho
\end{equation*}
Now by dividing this expression by $\rho^+$ and adding the term $(\sigma^+ - \rho^+)(\frac{\kappa}{\kappa^+} - \frac{\rho}{\rho^+})$ to it (which is positive because $\sigma^+\geq \rho^+$ and $\kappa \sleq \rho$) we get the desired inequality.
\end{proof}
\begin{corollary}
$D_t$ is a monotone map for $\sleq$.
\end{corollary}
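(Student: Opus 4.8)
The plan is to obtain the corollary as an immediate specialisation of the preceding lemma, taking $\kappa = \bot_n$. Unwinding the definition of the depolarising map, monotonicity of $D_t$ amounts to the implication $\rho \sleq \sigma \implies (1-t)\rho + t\bot_n \sleq (1-t)\sigma + t\bot_n$, which is exactly the conclusion of the lemma once the auxiliary state $\kappa$ is set equal to $\bot_n$. So no new estimate is needed; the entire task is to verify that $\bot_n$ is a legitimate choice of $\kappa$ and then to read off the result.

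First I would check that the hypothesis $\kappa \sleq \rho \sleq \sigma$ of the lemma is met with $\kappa = \bot_n$. This is immediate: by Theorem~\ref{theor:properties}, point~1, the completely mixed state $\bot_n$ is the bottom element of the QPE order, so $\bot_n \sleq \rho$ holds unconditionally for every state $\rho$. Combining this with the assumption $\rho \sleq \sigma$ then yields the required chain $\bot_n \sleq \rho \sleq \sigma$.

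With this in hand I would simply invoke the preceding lemma with $\kappa = \bot_n$ to conclude $(1-t)\rho + t\bot_n \sleq (1-t)\sigma + t\bot_n$, that is, $D_t(\rho) \sleq D_t(\sigma)$, for every $0 \le t \le 1$. I do not expect any genuine obstacle here: all of the analytic work, namely the cancellation of the order-inequality terms and the estimate bounding the cross term by using $\sigma^+ \geq \rho^+$ together with $\kappa \sleq \rho$, has already been carried out in the lemma. The only point requiring a moment's care is confirming that $\bot_n$ sits below both $\rho$ and $\sigma$, and this is guaranteed by its being the global minimum of the order. The degenerate case $t=1$, in which $D_1$ collapses every state to $\bot_n$, is of course consistent with this, since $\bot_n \sleq \bot_n$.
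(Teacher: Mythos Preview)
Your proposal is correct and matches the paper's approach exactly: the corollary is obtained by specialising the preceding lemma to $\kappa = \bot_n$, using that $\bot_n$ is the bottom element of the QPE order. No further argument is needed.
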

The order is also preserved by measurement results that preserve the highest eigenvector of a state, these are the measurements that can be considered `positive' as discussed in section 2.
\begin{lemma}
	Let $\rho\sleq \sigma$ and let $0\leq E\leq 1$ with $L^+(E)\cap L^+(\sigma) \neq \{0\}$, then
	$$\frac{E^\half \rho E^\half}{\tr(E\rho)} \sleq \frac{E^\half \sigma E^\half}{\tr(E\sigma)}.$$
\end{lemma}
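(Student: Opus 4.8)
The plan is to reduce the order inequality for the two updated states to the order inequality $\sigma^+\rho-\rho^+\sigma\geq 0$ that we already have from $\rho\sleq\sigma$. Write $\rho' = E^\half\rho E^\half/\tr(E\rho)$ and $\sigma' = E^\half\sigma E^\half/\tr(E\sigma)$ for the updated states. The first thing I would do is pin down their highest eigenvalues, since these appear in the order inequality we must verify.

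By hypothesis there is a nonzero $v$ with $Ev=E^+v$ and $\sigma v=\sigma^+v$; since $\rho\sleq\sigma$ gives $L^+(\sigma)\subseteq L^+(\rho)$ by Lemma \ref{lem:basic}, this same $v$ also satisfies $\rho v=\rho^+ v$. Using $E^\half v = (E^+)^\half v$ and commuting the resulting scalars through, a short computation gives $\sigma' v = \tfrac{\sigma^+E^+}{\tr(E\sigma)}v$ and $\rho' v = \tfrac{\rho^+E^+}{\tr(E\rho)}v$. I would then argue that these eigenvalues are in fact the largest: for any unit vector $w$, setting $u=E^\half w$ gives $w^\dagger E^\half\sigma E^\half w = u^\dagger\sigma u\leq \sigma^+\norm{u}^2 = \sigma^+\, w^\dagger E w\leq \sigma^+E^+$, exactly the bound realised by $v$; the identical estimate works for $\rho'$. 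Hence $(\sigma')^+ = \sigma^+E^+/\tr(E\sigma)$ and $(\rho')^+ = \rho^+E^+/\tr(E\rho)$, the same eigenvalue identity already exploited in the proof of Theorem \ref{theor:evidence}.

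With the top eigenvalues in hand the rest is algebra. Forming the order inequality for $\rho'\sleq\sigma'$ and substituting, both terms share the prefactor $E^+/(\tr(E\rho)\tr(E\sigma))$, and the numerator collapses to a single conjugation:
$$(\sigma')^+\rho' - (\rho')^+\sigma' = \frac{E^+}{\tr(E\rho)\tr(E\sigma)}\,E^\half\left(\sigma^+\rho - \rho^+\sigma\right)E^\half.$$
Since $\rho\sleq\sigma$ means $\sigma^+\rho-\rho^+\sigma\geq 0$, and conjugation by $E^\half$ preserves positivity while the scalar prefactor is strictly positive, the right-hand side is a positive operator, giving $\rho'\sleq\sigma'$.

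The only genuine step is the computation of the highest eigenvalues: everything downstream is forced once $(\rho')^+$ and $(\sigma')^+$ are known, because the $E^\half(\cdot)E^\half$ factors out cleanly. The point to be careful about is the claim that $\sigma^+E^+/\tr(E\sigma)$ really is the \emph{maximum} rather than merely \emph{an} eigenvalue of $\sigma'$; this is where the intersection condition $L^+(E)\cap L^+(\sigma)\neq\{0\}$ is essential, since it both supplies the eigenvector $v$ and, together with $L^+(\sigma)\subseteq L^+(\rho)$, guarantees that $\rho'$ and $\sigma'$ share $v$ as a common top eigenvector — which is precisely what makes the two normalisations compatible and lets the conjugation factor through.
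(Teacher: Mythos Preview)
Your proof is correct and follows exactly the route the paper sketches: identify the common top eigenvector via $L^+(E)\cap L^+(\sigma)\subseteq L^+(\rho)$ as in Theorem~\ref{theor:evidence}, then observe that the order inequality for the updated states is $E^\half(\sigma^+\rho-\rho^+\sigma)E^\half\geq 0$, which is the original order inequality conjugated by a Hermitian operator. You have simply spelled out what the paper compresses into one sentence.
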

\begin{proof}
	Follows in the same way as Theorem \ref{theor:evidence} and noting that if $A\geq B$ then for any Hermitian operator $E$ we have $EAE\geq EBE$.
\end{proof}
Note that this is not monotone over the entire state space, but only on that part on which the highest eigenvectors agree.
It is interesting that the monotonicity here holds using the standard projection postulate derived rule, and not using FLS Bayes' rule. This again points in the direction of there not being one definite `correct' Bayes' rule.

We could consider the QPE order as one in a family of order structures on states given by renormalising the states using some scalar:
\begin{definition}
Let  $f:DO(n)\rightarrow \mathbb{R}_{>0}$ be a continuous map. We define the $f$-renormalised order as
\begin{equation*}
    \rho \sleq^f \pi \iff \frac{\rho}{f(\rho)} \geq \frac{\pi}{f(\pi)} \iff f(\pi)\rho - f(\rho)\pi \geq 0.
\end{equation*}
\end{definition}
But it turns out that the QPE order is the unique renormalised order that preserves the convex structure of the state space.
\begin{theorem}
Let  $f:DO(n)\rightarrow \mathbb{R}_{>0}$ be continuous. Suppose $\sleq^f$ is an order where (1) the completely mixed state is the least element, (2) each state is below the pure state corresponding to its highest eigenvector and (3) $\sleq^f$ respects the convex structure of $DO(n)$. Then $\sleq^f = \sleq$.
\end{theorem}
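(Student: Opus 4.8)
The plan is to trap $f$ between two multiples of the operator norm using (1) and (2), reduce the statement to a single scalar identity on pure states, and then spend condition (3) to close the remaining gap. First I would read off the quantitative content of (1) and (2) exactly as in Lemma~\ref{lem:basic}, by feeding the relevant highest eigenvector into the order inequality of $\sleq^f$. Writing $v\in L^+(\rho)$ for a normalised highest eigenvector, condition (1) reads $f(\rho)\bot_n - f(\bot_n)\rho \geq 0$, and sandwiching it by $v^\dagger(\,\cdot\,)v$ gives $f(\rho)\geq n f(\bot_n)\norm{\rho}$; condition (2) reads $f(\ket{v}\bra{v})\rho - f(\rho)\ket{v}\bra{v}\geq 0$, and the same sandwich gives $f(\rho)\leq f(\ket{v}\bra{v})\norm{\rho}$. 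Hence
\begin{equation*}
 n f(\bot_n)\norm{\rho}\ \leq\ f(\rho)\ \leq\ f(\ket{v}\bra{v})\norm{\rho},
\end{equation*}
and putting $\rho=\ket{v}\bra{v}$ in the left bound already yields $f(\ket{v}\bra{v})\geq n f(\bot_n)$. Everything now reduces to the reverse inequality $f(\ket{v}\bra{v})\leq n f(\bot_n)$ for every unit $v$: once this holds the two bounds coincide and force $f(\rho)=c\,\norm{\rho}$ with the single constant $c=n f(\bot_n)$, whence $f(\pi)\rho-f(\rho)\pi = c\,(\norm{\pi}\rho-\norm{\rho}\pi)$ has the same sign as the order inequality defining $\sleq$, so $\sleq^f=\sleq$.

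To produce the reverse inequality I would lean on (3) in the form that $\rho\sleq^f\sigma$ forces the segment between them to be a chain,
\begin{equation*}
 \frac{\sigma}{f(\sigma)}\ \leq\ \frac{(1-t)\rho+t\sigma}{f\big((1-t)\rho+t\sigma\big)}\ \leq\ \frac{\rho}{f(\rho)}\qquad (0\leq t\leq 1).
\end{equation*}
This becomes a rigidity precisely when the pair lies on the boundary of $\sleq^f$, i.e.\ when the positive operator $\tfrac{\rho}{f(\rho)}-\tfrac{\sigma}{f(\sigma)}$ acquires a zero eigenvalue along some $u$ with $\rho u\neq 0$. For such boundary pairs the chain inequalities pinch the renormalised midpoint from both sides, and a short computation with the eigenvalues in a shared eigenbasis forces $f$ to be affine along the segment, $f\big((1-t)\rho+t\sigma\big)=(1-t)f(\rho)+t\,f(\sigma)$. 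The strategy is then to join a pure state $\ket{v}\bra{v}$ to the bottom element $\bot_n$ by a chain of such boundary pairs, propagate the affine identity along it, and pass to the limit using continuity of $f$, landing on $f(\ket{v}\bra{v})=n f(\bot_n)$.

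The last step is the crux, and the difficulty is structural. Conditions (1) and (2) are irremediably one-sided: they bound $f(\ket{v}\bra{v})$ only from below, and since pure states are $\sleq^f$-maximal nothing lies above them to supply an upper bound, so $f(\ket{v}\bra{v})\leq n f(\bot_n)$ has to be manufactured from convexity alone. The obvious candidate boundary pairs are useless here—the segments $[\bot_n,\rho]$ and $[\rho,\ket{v}\bra{v}]$ are radial, share no independent direction, and the affineness computation degenerates to a tautology, while near a pure state every comparable state is itself radial—so the boundary pairs one actually needs live in the interior of the state space and, in the smallest case $n=2$, are non-commuting (there the positive cone is genuinely Lorentzian and the order interval between boundary-separated states collapses onto the literal segment, which is what makes the pinch exact; for $n\geq 3$ one must instead work through shared eigendirections). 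Setting up these interior boundary families and controlling the limit that transports the affine identity out to the pure-state boundary is where the real work lies; the remainder is bookkeeping with the order inequality.
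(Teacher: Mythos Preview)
Your reduction is exactly the paper's: conditions (1) and (2) yield the sandwich $n f(\bot_n)\norm{\rho}\leq f(\rho)\leq f(\ket{v}\bra{v})\norm{\rho}$, and everything collapses to proving $f(\ket{v}\bra{v})\leq n f(\bot_n)$. Your observation that on a ``boundary pair'' $\rho\sleq^f\sigma$ (one where $\tfrac{\rho}{f(\rho)}-\tfrac{\sigma}{f(\sigma)}$ has a nontrivial kernel meeting $\mathrm{supp}(\rho)$) convexity forces $f$ to be affine along the segment is correct and pleasant: at the shared eigenvector $u$ the chain inequality pinches $\tfrac{(1-t)\rho+t\sigma}{f((1-t)\rho+t\sigma)}$ between two equal values, and solving gives $f((1-t)\rho+t\sigma)=(1-t)f(\rho)+tf(\sigma)$.

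The gap is that you stop precisely where the argument becomes nontrivial. You yourself flag that radial segments are useless, that near a pure state all comparable states are radial, and that the ``interior boundary families'' still need to be constructed and the limit controlled---and then you do not construct them. As written, the proposal is a strategy, not a proof: nothing guarantees that boundary pairs can be threaded from $\bot_n$ out to a neighbourhood of $\ket{v}\bra{v}$ in a way that lets the affine identity survive the limit, and the obstructions you list are real. The paper avoids this entirely with a short contradiction. Assume $f(\bot_n)<\tfrac{1}{n}f(P_1)$; by continuity there is a nearly-uniform diagonal state $\rho_\delta=\mathrm{diag}(\tfrac1n+\delta,\ldots,\tfrac1n+\delta,\tfrac1n-(n-1)\delta)$ with $f(\rho_\delta)\leq f(P_1)(\tfrac1n-(n-1)\delta)$. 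If $f(P_n)\geq f(P_1)$ this inequality makes $\rho_\delta\sleq^f P_n$---a pure state over the \emph{wrong} coordinate---and then condition (3) forces $\rho_\delta\sleq^f(1-t)\rho_\delta+tP_n$ for all $t$; choosing $t=\tfrac{1}{1+n\delta}$ gives $(1-t)\rho_\delta+tP_n=\bot_n$, so $\rho_\delta\sleq^f\bot_n$, contradicting antisymmetry. Hence $f(P_n)<f(P_1)$, and by relabelling also $f(P_1)<f(P_n)$, which is absurd. So the key use of convexity is not an affine identity along boundary chains but a single instance: a state below the ``wrong'' pure state must, by convexity, lie below $\bot_n$. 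That trick is what your proposal is missing.
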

\begin{proof}
Let $\rho = $ diag$(x)$ with $x^+=x_1\geq x_i$ for all $i$. Then $\rho$ must be below $P_1$, the projection to the first coordinate:
\begin{equation*}
    \rho\sleq^f P_1 \iff f(P_1)\rho -f(\rho)P \geq 0 \iff f(P_1)\rho^+ \geq f(\rho).
\end{equation*}
$\bot_n$ must be below $\rho$ giving
\begin{align*}
    \bot_n\sleq^f \rho &\iff f(\rho)\frac{1}{n}I_n - f(\bot_n)\rho \geq 0 \\
    &\iff f(\rho)\geq nf(\bot_n)\rho^+.
\end{align*}
So we see that if we can show that $f(\bot_n)=f(P_1)\frac{1}{n}$ we must have $f(\rho)\geq f(P_1)\rho^+$ which together with the inequality for $\rho\sleq^f P_1$ gives $f(\rho) = f(P_1)\rho^+$ so that we only need to show that $f(P_1)=f(P)$ for any pure state $P$ to finish the proof. Because we already have $f(\bot_n)\leq f(P_1)\bot_n^+ = f(P_1)\frac{1}{n}$ it suffices to show that it is not possible to have $f(\bot_n)< f(P_1)\frac{1}{n}$.

So assume that $f(\bot_n)< f(P_1)\frac{1}{n}$. Now let $\rho_\delta = $ diag$(x_\delta)$ given by $x=(\frac{1}{n}+\delta,\ldots,\frac{1}{n}+\delta,\frac{1}{n} - (n-1)\delta)$ for $0 < \delta < \frac{1}{n(n-1)}$. There is such a $\delta$ such that $f(\rho_\delta) \leq f(P_1)(\frac{1}{n}-(n-1)\delta)$ otherwise we would have $f(\bot_n) = f(\lim_{\delta\rightarrow 0} \rho_\delta) = \lim_{\delta\rightarrow 0}f(\rho_\delta) \geq \lim_{\delta\rightarrow 0} f(P_1)(\frac{1}{n}-(n-1)\delta) = f(P_1)\frac{1}{n}$ by continuity of $f$. Take a $\delta$ for which this inequality holds. Then we note that
\begin{align*}
    \rho_\delta \sleq^f P_n &\iff f(P_n)x_\delta - f(\rho_\delta)P_n \geq 0 \\
    &\iff f(P_n)(\frac{1}{n}-(n-1)\delta) \geq f(\rho_\delta)
\end{align*}
which is definitely the case when $f(P_n)\geq f(P_1)$. So assuming $f(P_n)\geq f(P_1)$ we have $\rho_\delta\sleq^f P_n$. Then because the order preserves the convex structure for $z(t) = t\rho_\delta + (1-t)P_n$ we should have $\rho_\delta\leq z(t)$ for all $0\leq t\leq 1$. Let $t=\frac{1}{1+n\delta}$ so that $z(t)=\bot_n$. Then $\rho_\delta\sleq^f \bot_n$ which by antisymmetry gives $\rho_\delta = \bot_n$ which is not the case. This means we must have $f(P_n)<f(P_1)$, but the choice of which coordinate we labeled $n$ and which as $1$ was arbitrary so we also get $f(P_1)<f(P_n)$. This is a contradiction, which means that we can't have $f(\bot_n)<f(P_1)\frac{1}{n}$. So now $f(\bot_n)= f(P_1) \frac{1}{n}$, but again the $1$ is arbitrary so that we must have $f(P_1)=f(P)$ for all pure states $P$ which means we have $f(\rho)=C \rho^+$ where $C=f(P)$. Dividing $f$ by  a positive scalar doesn't change the order so $\sleq^f = \sleq$.
\end{proof}

It is interesting to note that there is an infinite family of orders on states that respect convexity (see the author's Master's thesis \cite{weteringthesis}) and there are infinitely many orders that respect tensor products, e.g.\ renormalisations to functions $f(\rho) = \norm{\rho}_p^r$ where $\norm{\cdot}_p$ is a $p$-norm and $r\geq 1$ a real number, but the set of orders with both these properties seems to be really small: the only ones known are the QPE order and the two orders outlined in section 6.
 
\section{Quantum max-divergence}

The QPE order has some clear connections to purity of states as seen above. We might then wonder what the relation is with purity related quantities like von Neumann entropy or a relation such as majorisation. Unfortunately, they don't work nicely together. Denote the majorisation relation with $\prec$ and von Neumann entropy by $S$. If we take $x=$diag$(0.46,0.46,0.08)$ and $y=$diag$(0.7,0.2,0.1)$ then it is easily verifiable that $x\sleq y$, but not $x\prec y$ (or $y\prec x$) and if we take $x=\frac{1}{80}$diag$(30,29,11,10)$ and $y=\frac{1}{80}$diag$(34,23,12,11)$ then we have $x\sleq y$ while also $S(x)< S(y)$ (noting that entropy is supposed to be contravariant: it is bigger for more mixed quantities). 

A generalisation of the commonly used notions of Shannon entropy and Kullback-Leibler divergence is the concept of R\'enyi entropy/divergence. These quantities have been extended to quantum theory in two different forms: a straightforward generalisation \cite{mosonyi2011quantum,mosonyi2015quantum} and a `sandwiched' version \cite{muller2013quantum,quantumrenyi}. The latter seems to be the correct generalisation: it is monotone under application of quantum channels as you would expect and for $\alpha\geq 1$ they have an operational interpretation as generalised cut-off rates for state discrimination \cite{mosonyi2015quantum}. These quantities are defined as follows.

\begin{definition}
The \emph{quantum R\'enyi entropy and divergence of order} $\alpha \in (0,\infty)\setminus\{1\}$ for states $\sigma$ and $\rho$ are defined as
\begin{align*}
    H_\alpha(\sigma) &= \frac{1}{1-\alpha}\log\tr(\sigma^\alpha) = \frac{\alpha}{1-\alpha}\log\norm{\sigma}_\alpha \\
    D_\alpha(\sigma\lvert\rvert\rho) &= 
        \begin{cases}
        \frac{1}{\alpha-1}\log\tr\left[\left(\rho^{(1-\alpha)/2\alpha}\sigma\rho^{(1-\alpha)/2\alpha}\right)^\alpha\right] & \alpha<1\text{ or supp}(\sigma)\subseteq \text{supp}(\rho) \\
        \infty & \text{otherwise}
    \end{cases}
\end{align*}
where $\norm{\sigma}_p$ denotes the Schatten $p$-norm.
\end{definition}

There are a few special values of $\alpha$ that correspond to some better known quantities. When $\alpha\rightarrow 1$ we get back von Neumann entropy and the associated divergence. For $\alpha=2$ the entropy is equal to the negative logarithm of the Hilbert-Schmidt norm and for $\alpha=\half$ the divergence is related to the fidelity:
\begin{equation*}
    D_\frac{1}{2}(\sigma\lvert\rvert\rho) = -2\log\tr\left[\left(\rho^\frac{1}{2}\sigma\rho^\frac{1}{2}\right)^\frac{1}{2}\right] = -\log F^2(\sigma,\rho).
\end{equation*}
This is the only value of $\alpha$ for which the divergence is symmetric in its arguments.

For the limit of $\alpha\rightarrow \infty$ we get the \emph{quantum max-divergence}. Mosonyi and Ogowa showed that\cite{mosonyi2015quantum}:
\begin{align*}
    D_\infty(\sigma\lvert\rvert\rho) &= \lim_{\alpha\rightarrow\infty} D_\alpha(\sigma\lvert\rvert\rho) = \inf\{\gamma ~;~ \sigma\leq e^\gamma \rho\} \\
    &= \max_{M}\{D_\infty(M(\sigma)\lvert\rvert M(\rho))\}
\end{align*}
where the maximum is taken over all POVM's $M=(M_i)$ and $M(\sigma)$ denotes the probability distribution $(\tr\sigma M_i)$. The $D_\infty$ occurring in the bottom equation is the classical definition:
$$D_\infty(x\lvert \rvert y) = \log \max_i \frac{x_i}{y_i}.$$
By using the fact that the logarithm is monotonely increasing we could therefore also write the divergence as
\begin{equation*}
    D_\infty(\sigma\lvert\rvert\rho) = \log \max_M\max_i \frac{\tr \sigma M_i}{\tr \rho M_i} = \log \max_{0<M\leq 1}\frac{\tr \sigma M}{\tr \rho M}
\end{equation*}
which gives an interpretation of $D_\infty$ as measuring the worst case relative difference between states: given that an opponent can choose an arbitrary single measurement how well can he distuingish the states? For this reason we will also refer to quantum max-divergence as the \emph{worst case distuinguishability}. We will later on see that we can generalise the definition of the max-divergence to quantum channels and that we retrieve a similar sort of interpretation.

Analogous as shown for the related quantity of conditional min-entropy \cite{tomamichel2012framework} we will show that this max divergence is easy to calculate using the following result:
\begin{lemma}
Let $\sigma$ and $\rho$ be states, then
\begin{equation*}
    D_\infty(\sigma\lvert\rvert\rho)  = \begin{cases} \log\norm{\rho^{-\frac{1}{2}}\sigma \rho^{-\frac{1}{2}}} & \text{supp}(\sigma)\subseteq \text{ supp}(\rho) \\
    \infty & \text{otherwise}
    \end{cases}
\end{equation*}
where the norm is the operator norm and $A^{-1}$ is the generalised inverse that is zero on ker$(A)$.
\end{lemma}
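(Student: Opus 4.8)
The plan is to start from the Mosonyi--Ogawa characterisation $D_\infty(\sigma\lvert\rvert\rho) = \inf\{\gamma : \sigma\leq e^\gamma\rho\}$ recalled just above the statement, and to transport the operator inequality $\sigma\leq e^\gamma\rho$ into a statement about a single operator norm by conjugating with the generalised inverse $\rho^{-\half}$.

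First I would dispose of the support condition. If $\text{supp}(\sigma)\not\subseteq\text{supp}(\rho)$ then $\ker(\rho)\not\subseteq\ker(\sigma)$, so there is a unit vector $v\in\ker(\rho)$ with $\sigma v\neq 0$, hence $\langle v, \sigma v\rangle = \norm{\sigma^\half v}^2 > 0$ while $\langle v, e^\gamma\rho v\rangle = 0$ for every $\gamma$. Thus $\sigma\leq e^\gamma\rho$ can never hold, the infimum ranges over the empty set, and $D_\infty(\sigma\lvert\rvert\rho) = \infty$, matching the stated case.

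For the main case I assume $\text{supp}(\sigma)\subseteq\text{supp}(\rho)$ and let $P$ be the projection onto $\text{supp}(\rho)$, so that $\rho^{-\half}\rho^\half = \rho^\half\rho^{-\half} = P$, $\rho^\half P\rho^\half = \rho$, and (by the support hypothesis) $P\sigma P = \sigma$. The key step is the equivalence
$$\sigma\leq e^\gamma\rho \iff \rho^{-\half}\sigma\rho^{-\half}\leq e^\gamma I.$$
The forward direction follows by conjugating $\sigma\leq e^\gamma\rho$ by $\rho^{-\half}$, yielding $\rho^{-\half}\sigma\rho^{-\half}\leq e^\gamma P\leq e^\gamma I$; the backward direction follows by conjugating $\rho^{-\half}\sigma\rho^{-\half}\leq e^\gamma I$ by $\rho^\half$ and using $\rho^\half P\rho^\half = \rho$ together with $P\sigma P = \sigma$ to recover $\sigma\leq e^\gamma\rho$. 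Writing $X := \rho^{-\half}\sigma\rho^{-\half}\geq 0$, the elementary fact $X\leq cI \iff \norm{X}\leq c$ shows that the admissible exponents are exactly $\{\gamma : e^\gamma\geq\norm{X}\} = [\log\norm{X},\infty)$, whose infimum is $\log\norm{\rho^{-\half}\sigma\rho^{-\half}}$. Chaining this with the Mosonyi--Ogawa formula gives the claim.

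The main obstacle I expect is bookkeeping around the generalised inverse: one must check that conjugation by $\rho^{-\half}$ is genuinely reversible, which holds because $\rho^{-\half}$ is an honest positive invertible operator on $\text{supp}(\rho)$ even though it is zero on $\ker\rho$. The support hypothesis $P\sigma P = \sigma$ is precisely what lets us squeeze both operators onto $\text{supp}(\rho)$ without losing information, so the equivalence of inequalities becomes a true `if and only if' rather than a one-sided implication; dropping it breaks the backward direction, which is exactly the reason that case is sent to $\infty$.
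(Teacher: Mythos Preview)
Your proof is correct and follows essentially the same route as the paper: both start from the infimum characterisation $D_\infty(\sigma\lvert\rvert\rho)=\inf\{\gamma:\sigma\le e^\gamma\rho\}$ and translate the operator inequality via conjugation by $\rho^{\pm\half}$ into a bound on $\norm{\rho^{-\half}\sigma\rho^{-\half}}$. Your version is in fact slightly tidier---you handle the $\infty$ case explicitly and obtain the backward implication by directly conjugating with $\rho^{\half}$ and invoking $P\sigma P=\sigma$, whereas the paper argues the equivalence by contrapositive using that $\rho^{-\half}$ is a bijection on $\text{supp}(\rho)$.
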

\begin{proof}
We will assume that supp$(\sigma)\subseteq $ supp$(\rho)$, so that $D_\infty$ is finite. Let us first rewrite $D_\infty$ in a more workable form.
    \begin{align*}
        D_\infty(\sigma\lvert\rvert\rho) &= \inf\{\gamma ~;~ \sigma\leq e^\gamma \rho\} \\
        &= \inf\{\log\lambda ~;~ \sigma\leq \lambda \rho\} \\
        &= \log\inf\{\lambda ~;~ \sigma\leq \lambda \rho\}
    \end{align*}
    Let $P$ denote the projection onto the support of $\rho$. $\rho$ is a bijection when restricted to its support, and so is $\rho^{-\frac{1}{2}}$. We note that $A\mapsto \rho^{-\frac{1}{2}}A\rho^{-\frac{1}{2}}$ is a monotone map. This means that 
    \begin{equation*}
        \sigma\leq \lambda \rho \implies \rho^{-\frac{1}{2}}\sigma\rho^{-\frac{1}{2}}\leq \lambda \rho^{-\frac{1}{2}}\rho\rho^{-\frac{1}{2}} = \lambda P
    \end{equation*}
    The other direction also holds: assume that there is a $v$ such that $v^\dagger\sigma v > \lambda v^\dagger\rho v$. We may assume that $v$ is in the support of $\rho$. Because $\rho^{-\frac{1}{2}}$ is a bijection on its support there is a $w$ such that $\rho^{-\frac{1}{2}}w = v$. We then get $w^\dagger \rho^{-\frac{1}{2}}\sigma\rho^{-\frac{1}{2}} w > w^\dagger \lambda \rho^{-\frac{1}{2}}\rho\rho^{-\frac{1}{2}} w$, which shows the other implication.
    So now
    \begin{align*}
         D_\infty(\sigma\lvert\rvert\rho) &= \log\inf\{\lambda ~;~ \sigma\leq \lambda \rho\} \\
         &= \log\inf\{\lambda ~;~ \rho^{-\frac{1}{2}}\sigma\rho^{-\frac{1}{2}} \leq \lambda P\}.
    \end{align*}
    A similar argument as above shows that we may replace $P$ in that expression with $I_n$, the identity. We then note that $\norm{A} = \inf\{\lambda ~;~ \norm{Av} \leq \lambda \norm{v}\} = \inf\{\lambda ~;~ A \leq \lambda I_n\}$ where the last equality holds when $A$ is a positive operator. Applying this to the expression of $D_\infty$ above then proves the required statement.
\end{proof}

Written in this new form it becomes easy to prove the following:
\begin{theorem}
The quantum max-divergence is a quasi-metric: positive definite and satisfying the triangle inequality, but failing to be symmetric.
\end{theorem}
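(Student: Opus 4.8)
The plan is to carry out everything in the reformulation $D_\infty(\sigma\lvert\rvert\rho) = \log\inf\{\lambda ~;~ \sigma\leq\lambda\rho\}$ established in the proof of the previous lemma, since each of the three claimed properties is most transparent there; the operator-norm formula is only needed at the end for producing an explicit asymmetric example.

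For positive definiteness I would first argue nonnegativity by taking the trace of $\sigma\leq\lambda\rho$: since $\tr\sigma=\tr\rho=1$ this forces $\lambda\geq 1$, so the infimum is at least $1$ and its logarithm is nonnegative. For the definiteness half, note that the set $\{\lambda ~;~ \sigma\leq\lambda\rho\}$ is closed (it is the preimage of the closed positive cone under the continuous map $\lambda\mapsto\lambda\rho-\sigma$) and of the form $[\lambda_0,\infty)$, so the infimum is attained. Hence if $D_\infty(\sigma\lvert\rvert\rho)=0$ the infimum equals $1$ and $\sigma\leq\rho$; then $\rho-\sigma\geq 0$ has trace zero, and a positive operator of trace zero is zero, giving $\sigma=\rho$. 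The converse $D_\infty(\rho\lvert\rvert\rho)=0$ is immediate.

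The triangle inequality $D_\infty(\sigma\lvert\rvert\tau)\leq D_\infty(\sigma\lvert\rvert\rho)+D_\infty(\rho\lvert\rvert\tau)$ I would obtain by chaining scalings: if $\sigma\leq\lambda\rho$ and $\rho\leq\mu\tau$ then $\sigma\leq\lambda\mu\tau$, so $\inf\{\nu ~;~ \sigma\leq\nu\tau\}\leq\lambda\mu$; taking logarithms and then infima over admissible $\lambda,\mu$ turns the product into the sum on the right. The infinite cases are dispatched by support bookkeeping: if either term on the right is $\infty$ the inequality is trivial, and when both are finite one has $\text{supp}(\sigma)\subseteq\text{supp}(\rho)\subseteq\text{supp}(\tau)$, so $\text{supp}(\sigma)\subseteq\text{supp}(\tau)$ and the left-hand side is finite as well.

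Finally, to break symmetry it suffices to exhibit a single pair of states. I would take the qubit states $\rho=\bot_2=\half I_2$ and $\sigma=\text{diag}(\frac{3}{4},\frac{1}{4})$; the operator-norm formula applied to $\norm{\rho^{-\half}\sigma\rho^{-\half}}$ and $\norm{\sigma^{-\half}\rho\sigma^{-\half}}$ gives $D_\infty(\sigma\lvert\rvert\rho)=\log\frac{3}{2}$ but $D_\infty(\rho\lvert\rvert\sigma)=\log 2$, which disagree. (An even starker witness is any pure $\rho$ together with a full-rank $\sigma$, for which $D_\infty(\sigma\lvert\rvert\rho)=\infty$ while $D_\infty(\rho\lvert\rvert\sigma)$ is finite.) None of these steps is genuinely hard; the only point demanding care is the support bookkeeping in the triangle inequality together with the use of closedness of $\leq$ to guarantee that the infimum defining $D_\infty$ is attained in the definiteness argument, which is where a careless treatment could slip.
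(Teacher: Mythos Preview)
Your proof is correct and takes a genuinely different route from the paper's. The paper works throughout in the operator-norm formula $D_\infty(\sigma\lvert\rvert\rho)=\log\norm{\rho^{-\frac12}\sigma\rho^{-\frac12}}$: nonnegativity comes from a contradiction via $1=\tr\sigma=\tr\bigl(\rho^{\frac12}(\rho^{-\frac12}\sigma\rho^{-\frac12})\rho^{\frac12}\bigr)$, definiteness from pushing that equality to $\rho^{-\frac12}\sigma\rho^{-\frac12}=P$, and the triangle inequality from inserting $P=\rho^{\frac12}\rho^{-\frac12}$ and applying the C$^*$-identity $\norm{ABA^\dagger}\leq\norm{AA^\dagger}\,\norm{B}$. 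You instead stay in the order-theoretic reformulation $D_\infty(\sigma\lvert\rvert\rho)=\log\inf\{\lambda:\sigma\leq\lambda\rho\}$: nonnegativity is a one-line trace, definiteness uses closedness of the positive cone to attain the infimum, and the triangle inequality is just chaining $\sigma\leq\lambda\rho\leq\lambda\mu\tau$. Your argument is more elementary---it never needs inverses, square roots, or the C$^*$-norm inequality---and the support bookkeeping is handled more cleanly. The paper's approach, on the other hand, exercises the explicit norm formula just established in the preceding lemma and shows it is genuinely useful. You also supply an explicit witness of asymmetry, which the paper leaves implicit.
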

\begin{proof}
It was already shown in \cite{muller2013quantum} that all the quantum R\'enyi divergences are positive definite, but we will show it here explicitly.
To see that it is positive, we need to show that 
\begin{equation*}
    \norm{\rho^{-\frac{1}{2}}\sigma\rho^{-\frac{1}{2}}} \geq 1
\end{equation*}
whenever ker$(\rho)\subseteq$ ker$(\sigma)$ (otherwise the divergence will be infinite which is of course postive). Assume the converse and let $P$ be the projection onto the image of $\rho$. Then
\begin{align*}
    1 &= \tr(\sigma) = \tr(P\sigma P) = \tr(\rho^\frac{1}{2}\left(\rho^{-\frac{1}{2}}\sigma\rho^{-\frac{1}{2}}\right)\rho^\frac{1}{2}) \\
    &< \tr(\rho^\frac{1}{2} I \rho^\frac{1}{2}) = \tr(\rho) = 1
\end{align*}
which is a contradiction. 

If we have $D_\infty(\sigma\lvert\rvert \rho) = 0$ then $\norm{\rho^{-\frac{1}{2}}\sigma\rho^{-\frac{1}{2}}} = 1$ so that the above argument gives:
\begin{equation*}
    \tr(\rho^\frac{1}{2}\left(\rho^{-\frac{1}{2}}\sigma\rho^{-\frac{1}{2}}\right)\rho^\frac{1}{2}) = \tr(\rho^\frac{1}{2} I \rho^\frac{1}{2})
\end{equation*}
which is only possible when $\rho^{-\frac{1}{2}}\sigma\rho^{-\frac{1}{2}} = P$, the projection onto the image of $\rho$. This expression can be rewritten to $\sigma = \rho$, which proves positive definiteness.

For the triangle inequality, let $P=\rho^\frac{1}{2}\rho^{-\frac{1}{2}} = \rho^{-\frac{1}{2}}\rho^\frac{1}{2}$ be again the projection onto the image of $\rho$ and assume that all the supports are included in the right way (since otherwise both sides of the triangle inequality would be infinite), then
\begin{align*}
    \norm{\kappa^{-\frac{1}{2}}\sigma\kappa^{-\frac{1}{2}}} &\leq \norm{\kappa^{-\frac{1}{2}}P\sigma P\kappa^{-\frac{1}{2}}} \\
    &= \norm{\kappa^{-\frac{1}{2}}\rho^\frac{1}{2}\left(\rho^{-\frac{1}{2}}\sigma\rho^{-\frac{1}{2}}\right)\rho^\frac{1}{2}\kappa^{-\frac{1}{2}}}
\end{align*}
Note that this last expression has the form $\norm{ABA^\dagger}$, for which holds: $\norm{ABA^\dagger}\leq \norm{A}\norm{B}\norm{A^\dagger} = \norm{AA^\dagger}\norm{B}$ where in the last equality we used the C$^*$-algebra property of the norm. Applying this we get
\begin{equation*}
    \norm{\kappa^{-\frac{1}{2}}\sigma\kappa^{-\frac{1}{2}}} \leq \norm{\rho^{-\frac{1}{2}}\sigma\rho^{-\frac{1}{2}}}\norm{\kappa^{-\frac{1}{2}}\rho^\frac{1}{2}\rho^\frac{1}{2}\kappa^{-\frac{1}{2}}}
\end{equation*}
Noting that $\rho^\frac{1}{2}\rho^\frac{1}{2}=\rho$ and taking the logarithm on both sides then gives the triangle inequality for $D_\infty$.
\end{proof}

The positive-definiteness of this quantity was already known \cite{quantumrenyi}, this proof is merely a more direct version. Although not too hard to prove, the triangle inequality doesn't seem to have been described in the literature before.

A way to measure purity for a state would be to look at the maximum overlap it has with any pure state:
\begin{equation*}
    \max_v F^2(\rho,\ket{v}\bra{v}) = \max_v \bra{v} \rho \ket{v} = \norm{\rho}.
\end{equation*}
The min-entropy is minus the logarithm of this expression, so it is a measure of how far a state is from being pure. Now we can combine the quantum min-entropy, max-divergence and the QPE order in an intuitive way.

\begin{theorem}
\label{theor:divergence}
Let $\rho$ and $\sigma$ be states, then 
\begin{equation*}
    D_\infty(\sigma\lvert\rvert \rho) \geq H_\infty(\rho) - H_\infty(\sigma) = \log\frac{\norm{\sigma}}{\norm{\rho}}
\end{equation*}
and equality holds if and only if $\rho\sleq \sigma$:
\begin{equation*}
    \rho\sleq\sigma \iff D_\infty(\sigma\lvert\rvert \rho) = \log\frac{\norm{\sigma}}{\norm{\rho}}
\end{equation*}
\end{theorem}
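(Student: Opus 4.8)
The plan is to work entirely through the characterisation $D_\infty(\sigma\lvert\rvert\rho) = \log\inf\{\lambda ~;~ \sigma\leq\lambda\rho\}$ recorded earlier, together with the single observation that the order inequality is nothing but a statement about one admissible value of $\lambda$ in that infimum.

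First I would dispose of the inequality itself. If $\text{supp}(\sigma)\not\subseteq\text{supp}(\rho)$ then $D_\infty(\sigma\lvert\rvert\rho)=\infty$ and there is nothing to prove, so assume the supports are nested. For any $\lambda$ with $\sigma\leq\lambda\rho$, both $\sigma$ and $\lambda\rho$ are positive operators, so monotonicity of the operator norm on the positive cone gives $\norm{\sigma}\leq\lambda\norm{\rho}$, i.e.\ $\lambda\geq\norm{\sigma}/\norm{\rho}$. Taking the infimum over all admissible $\lambda$ and applying $\log$ yields $D_\infty(\sigma\lvert\rvert\rho)\geq\log(\norm{\sigma}/\norm{\rho})$, which is the claimed bound since $H_\infty(\rho)-H_\infty(\sigma)=\log(\norm{\sigma}/\norm{\rho})$.

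The equivalence then hinges on rewriting the order inequality. Dividing $\sigma^+\rho-\rho^+\sigma\geq 0$ by $\rho^+=\norm{\rho}$ shows that $\rho\sleq\sigma$ is \emph{exactly} the statement $\sigma\leq\tfrac{\norm{\sigma}}{\norm{\rho}}\rho$; that is, $\rho\sleq\sigma$ asserts precisely that the value $\lambda=\norm{\sigma}/\norm{\rho}$ — the very lower bound derived above — is admissible in the infimum defining $D_\infty$. From here both directions are immediate. If $\rho\sleq\sigma$, this admissible $\lambda$ forces the infimum down to its lower bound, giving equality; conversely, if equality holds, the infimum equals $\norm{\sigma}/\norm{\rho}$, and since it is attained we recover $\sigma\leq\tfrac{\norm{\sigma}}{\norm{\rho}}\rho$, which is the order relation. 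I would also remark that $\rho\sleq\sigma$ guarantees $\ker(\rho)\subseteq\ker(\sigma)$ by Lemma \ref{lem:basic}, so the finite-case formula for $D_\infty$ applies and no support pathology arises.

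The one point requiring care is the attainment of the infimum in the converse direction: I need $\{\lambda ~;~ \sigma\leq\lambda\rho\}$ to be closed, so that equality in the divergence genuinely yields the operator inequality at $\lambda=\norm{\sigma}/\norm{\rho}$ rather than only in a limit. This follows because $\lambda\mapsto\lambda\rho-\sigma$ is continuous and the positive semidefinite cone is closed, whence the set is a closed half-line and its infimum is a minimum. This is the main (though mild) obstacle; everything else reduces to the single algebraic identification of the order inequality with the admissibility of $\lambda=\norm{\sigma}/\norm{\rho}$.
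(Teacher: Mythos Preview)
Your argument is correct and is, in fact, a cleaner route than the one taken in the paper. The paper works through the formula $D_\infty(\sigma\lvert\rvert\rho)=\log\norm{\rho^{-1/2}\sigma\rho^{-1/2}}$: for the forward direction it conjugates the order inequality by $\rho^{-1/2}$ to bound $\norm{\rho^{-1/2}\sigma\rho^{-1/2}}$ by $\norm{\sigma}/\norm{\rho}$ and then needs a separate eigenvector argument to pin down equality; for the inequality and the converse it assumes $D_\infty\leq\log(\norm{\sigma}/\norm{\rho})$, conjugates back by $\rho^{1/2}$, and recovers the order relation. You bypass all of this by staying with the infimum formulation $D_\infty(\sigma\lvert\rvert\rho)=\log\inf\{\lambda:\sigma\leq\lambda\rho\}$ and making the single structural observation that the order inequality is \emph{literally} the admissibility of $\lambda=\norm{\sigma}/\norm{\rho}$. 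Your norm-monotonicity step replaces the paper's conjugation for the lower bound, and your closedness remark replaces the paper's eigenvector argument for attainment. What the paper's approach buys is an explicit computation of $\norm{\rho^{-1/2}\sigma\rho^{-1/2}}$ and a visible witness of where the maximum is achieved; what your approach buys is brevity and the avoidance of any pseudo-inverse manipulation or support restriction beyond the trivial finiteness check.
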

\begin{proof}
We first prove the equality in the if direction. Suppose $\rho\sleq \sigma$. We then have $\norm{\sigma}\rho - \norm{\rho}\sigma \geq 0$. We conjugate this expression with $\rho^{-\frac{1}{2}}$ which will preserve positivity because $\rho$ is self-adjoint. The resulting expression is then
\begin{equation*}
    \norm{\sigma}P - \norm{\rho}\rho^{-\frac{1}{2}}\sigma\rho^{-\frac{1}{2}}\geq 0
\end{equation*}
where $P$ is the projection onto the image of $\rho$. But that means that for any normalised vector $v$ in the image of $\rho$ we have
\begin{equation*}
    \frac{\norm{\sigma}}{\norm{\rho}} \geq v^\dagger\left(\rho^{-\frac{1}{2}}\sigma\rho^{-\frac{1}{2}}\right)v
\end{equation*}
from which it follows that the lefthandside is bigger than the supremum over the $v$'s on the righthandside: $\frac{\norm{\sigma}}{\norm{\rho}} \geq \norm{\rho^{-\frac{1}{2}}\sigma\rho^{-\frac{1}{2}}}$. To see that this is actually an equality note that if we pick a $v$ such that $\sigma v = \sigma^+ v$, that we then also have $\rho v = \rho^+ v$. By taking the logarithms on both sides we are done.

The other direction of the equality follows by proving the inequality. To show this inequality we will assume the inequality in the other direction and show that in that case we must actually have equality. So suppose we have $D_\infty(\sigma\lvert\rvert \rho) \leq H_\infty(\rho)-H_\infty(\sigma)$. Then in particular the divergence is finite which means that the kernel of $\rho$ is included in the kernel of $\sigma$, or equivalently that the image of $\sigma$ is included in the image of $\rho$. We now have for any normalised $v$:
\begin{equation*}
    \frac{\norm{\sigma}}{\norm{\rho}} \geq v^\dagger\left(\rho^{-\frac{1}{2}}\sigma\rho^{-\frac{1}{2}}\right)v
\end{equation*}
but we then immediately see that if $P$ is the projector on the image of $\rho$ then
\begin{equation*}
    \frac{\norm{\sigma}}{\norm{\rho}}P \geq \rho^{-\frac{1}{2}}\sigma\rho^{-\frac{1}{2}}.
\end{equation*}
Conjugating both sides by $\rho^\frac{1}{2}$ preserves positivity and this gives us
\begin{equation*}
    \frac{\norm{\sigma}}{\norm{\rho}}\rho \geq P\sigma P = \sigma
\end{equation*}
where the last inequality follows because the image of $\sigma$ is contained in the image of $\rho$. By multiplying both sides by $\norm{\rho}$ we get $\rho\sleq \sigma$ for which we know that actually $D_\infty(\sigma\lvert\rvert \rho) = H_\infty(\rho)-H_\infty(\sigma)$. So we indeed have $D_\infty(\sigma\lvert\rvert \rho) \geq H_\infty(\rho)-H_\infty(\sigma)$.
\end{proof}

This statement can be rewritten to
$$H_\infty(\sigma) = H_\infty(\rho) - D_\infty(\sigma\lvert\rvert \rho) ~\text{when }\rho\sleq\sigma.$$
Written in this way it is clear that this theorem generalises a common way to define an entropy quantity in terms of a divergence:
\begin{equation*}
    H(\sigma) = H(\bot_n) - D(\sigma\lvert\rvert \bot_n).
\end{equation*}
We could therefore interpret the $\sleq$ relation as telling us when a state $\rho$ is akin to the completely mixed state for a $\sigma$ with respect to $D_\infty$.

 Recalling that $D_\infty$ measures the worst case measurable difference between $\sigma$ and $\rho$ we also get an operational interpretation of the QPE order:

\noindent\hspace{1em}\begin{minipage}[h!]{\columnwidth-1em}
\vspace{0.5em}
\emph{Suppose we want to construct a state $\sigma$, but we can only achieve a maximum purity of $M\geq H_\infty(\sigma)$ as measured in terms of min-entropy. What state $\rho$ should we construct to minimise the worst case difference between $\sigma$ and $\rho$ as measured in terms of quantum max-divergence?}

\emph{Answer}: Pick a $\rho$ such that $H_\infty(\rho)=M$ and $\rho\sleq \sigma$.
\vspace{0.5em}
\end{minipage}

Section 2 told us that $\rho\sleq \sigma$ if and only if $\sigma$ is obtained from $\rho$ by some positive Bayesian evidence. The above theorem now implies that a Bayesian update changes the state as minimally as possible: any other state $\sigma^\prime$ with the same purity as $\sigma$ will have a max-divergence $D_\infty(\sigma^\prime\lvert\rvert \rho)$ at least as large as $D_\infty(\sigma\lvert\rvert \rho)$ when $\sigma$ is obtained from $\rho$ by a Bayesian update. 

There is also another corollary to Theorem \ref{theor:divergence}:

\begin{corollary}
    $\rho_1\otimes\rho_2\sleq \sigma_1\otimes \sigma_2$ if and only if $\rho_i\sleq \sigma_i$ for $i=1,2$.
    In particular, for all $k\in\mathbb{N}_{>0}$: $\rho^{\otimes k} \sleq \sigma^{\otimes k}$ if and only if $\rho\sleq \sigma$.
\end{corollary}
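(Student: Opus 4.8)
The plan is to route everything through the operational characterization in Theorem \ref{theor:divergence}, which says that $\rho\sleq\sigma$ holds exactly when the max-divergence $D_\infty(\sigma\lvert\rvert\rho)$ attains its lower bound $\log(\norm{\sigma}/\norm{\rho})$, and is strictly larger otherwise. The ``if'' direction (componentwise order implies the tensor order) is already the content of statement 7 of Theorem \ref{theor:properties}, so the real work is the converse. Rather than argue the two directions separately, I would give a single argument showing that both follow from two multiplicativity facts.

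First I would record that the operator norm is multiplicative on tensor products, $\norm{\sigma_1\otimes\sigma_2}=\norm{\sigma_1}\norm{\sigma_2}$, so that the lower bound splits additively: $\log(\norm{\sigma_1\otimes\sigma_2}/\norm{\rho_1\otimes\rho_2}) = \log(\norm{\sigma_1}/\norm{\rho_1}) + \log(\norm{\sigma_2}/\norm{\rho_2})$. Second, and this is the key computation, I would show that $D_\infty$ is additive on tensor products, $D_\infty(\sigma_1\otimes\sigma_2\lvert\rvert\rho_1\otimes\rho_2) = D_\infty(\sigma_1\lvert\rvert\rho_1) + D_\infty(\sigma_2\lvert\rvert\rho_2)$. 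Using the closed form from the earlier lemma, $D_\infty(\sigma\lvert\rvert\rho)=\log\norm{\rho^{-\frac{1}{2}}\sigma\rho^{-\frac{1}{2}}}$, this is immediate: the conjugated operator factors as $(\rho_1^{-\frac{1}{2}}\sigma_1\rho_1^{-\frac{1}{2}})\otimes(\rho_2^{-\frac{1}{2}}\sigma_2\rho_2^{-\frac{1}{2}})$ and the operator norm is again multiplicative.

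With these two facts in hand the corollary falls out. Combining them, $\rho_1\otimes\rho_2\sleq\sigma_1\otimes\sigma_2$ is by Theorem \ref{theor:divergence} equivalent to the statement that $\big(D_\infty(\sigma_1\lvert\rvert\rho_1)-\log\tfrac{\norm{\sigma_1}}{\norm{\rho_1}}\big) + \big(D_\infty(\sigma_2\lvert\rvert\rho_2)-\log\tfrac{\norm{\sigma_2}}{\norm{\rho_2}}\big) = 0$. Each parenthesized term is nonnegative by the inequality in Theorem \ref{theor:divergence}, and a sum of two nonnegative quantities vanishes iff both vanish; each vanishing term is, again by Theorem \ref{theor:divergence}, exactly the assertion $\rho_i\sleq\sigma_i$. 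For the ``in particular'' claim with $k$ equal factors, the same additivity gives $D_\infty(\sigma^{\otimes k}\lvert\rvert\rho^{\otimes k}) = k\,D_\infty(\sigma\lvert\rvert\rho)$ and $\log(\norm{\sigma^{\otimes k}}/\norm{\rho^{\otimes k}}) = k\log(\norm{\sigma}/\norm{\rho})$, so equality for the $k$-fold product is equivalent to equality for a single copy.

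I expect the only delicate point to be the bookkeeping of the degenerate cases in the additivity of $D_\infty$: one must check that the support condition for $\sigma_1\otimes\sigma_2$ relative to $\rho_1\otimes\rho_2$ is equivalent to the two componentwise support conditions, so that the value $\infty$ matches on both sides of the additivity identity (a single state factor, having nonzero trace, cannot repair a support failure in the other factor). Everything else is a direct consequence of the norm's multiplicativity and the divergence inequality already established.
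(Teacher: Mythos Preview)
Your proposal is correct and follows essentially the same route as the paper's own proof: the paper also defers the ``if'' direction to Theorem~\ref{theor:properties}(7) and derives the ``only if'' direction from the additivity $H_\infty(\rho_1\otimes\rho_2)=H_\infty(\rho_1)+H_\infty(\rho_2)$ together with $D_\infty(\sigma_1\otimes\sigma_2\,\lvert\rvert\,\rho_1\otimes\rho_2)=D_\infty(\sigma_1\,\lvert\rvert\,\rho_1)+D_\infty(\sigma_2\,\lvert\rvert\,\rho_2)$, invoking Theorem~\ref{theor:divergence}. Your write-up simply spells out the sum-of-nonnegatives argument and the support bookkeeping for the $\infty$ case more explicitly than the paper does.
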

\begin{proof}
    The if direction should be clear. The only if direction follows because $H_\infty(\rho_1\otimes \rho_2) = H_\infty(\rho_1) + H_\infty(\rho_2)$ and $D_\infty(\sigma_1\otimes \sigma_2 \lvert\rvert\rho_1\otimes \rho_2) = D_\infty(\sigma_1\lvert\rvert\rho_1) + D_\infty(\sigma_2\lvert\rvert\rho_2)$.
\end{proof}
This last fact means that nothing extra is gained by considering the states as part of a bigger system: no catalysis is possible as preparing additional copies of a state doesn't let it suddenly become below another state. This is in contrast to some other measures of convertability such as asymptotic LOCC where considering additional copies of a state allows for additional possible conversions.

\section{Ordering quantum channels}
There are a couple of ways we could try to lift this order on states to an order on Completely Positive Trace Preserving (CPTP) maps. Some desirable properties we would like are that the completely depolarising map $D_1$ is the bottom element, that the unitary evolutions are maximal elements, and that when we tensor maps together that the order is preserved. In this section let $\Phi,\Psi: M_n\rightarrow M_k$ be a couple of CPTP maps.

Perhaps the most straightforward way is to define the order analogously to how we defined it on states: $\Phi\sleq \Psi$ iff $\norm{\Psi}\Phi - \norm{\Phi}\Psi \geq 0$. The problem with this is that $\norm{\Psi} = \norm{\Psi(1)}$ so that any unital map has norm equal to 1. This means in particular that this order is discrete when considering unital maps.

Another way to make an order is to define $\Phi\sleq \Psi$ iff $\Phi(\rho)\sleq\Psi(\rho)$ for all states $\rho$. This order has $D_1$ as the minimal element and the unitary conjugations are maximal, but this order is not preserved by tensoring maps together. To see this let $D_1,id_2:M_2\rightarrow M_2$ be respectively the completely depolarising qubit channel, and the identity. Then we obviously have $D_1\sleq id_2$ in this order, but we don't have $D_1\otimes id_2 \sleq id_2\otimes id_2$ which can be seen by plugging in any operator for which the eigenvector corresponding to the highest eigenvalue is not separable and for which the partial trace still has distinct eigenvalues.

We could try to solve the problem of the previous order by defining a `completely positive' variant: $\Phi\sleq \Psi$ iff $(\Phi\otimes id_l)(\rho)\sleq (\Psi\otimes id_l)(\rho)$ for all states $\rho$ in $M_n\otimes M_l$ for all $l$, but it seems like this order is discrete (as seen by the example with the completely depolarising map above).

A construction that works is to transform the maps to states using the Choi-Jamio\l{}kowski isomorphism.

\begin{definition}
\emph{Channel-state duality / Choi-Jamio\l{}kowski isomorphism} \cite{jiang2013channel}: To each CPTP map $\Phi:M_n\rightarrow M_k$ we associate a state $J(\Phi)$ in $M_n\otimes M_k$ in the following way:
\begin{equation*}
    J(\Phi) = \frac{1}{n}(id_n\otimes \Phi)(\ket{M}\bra{M}) = \frac{1}{n}\sum_{i,j} E_{ij}\otimes \Phi(E_{ij})
\end{equation*}
where $\ket{M}$ is the maximally entangled state and $E_{ij}$ is the matrix with a $1$ at the $ij$ position and zero's everywhere else. The factor $\frac{1}{n}$ ensures that $\tr J(\Phi) = 1$ when $\Phi$ is trace preserving.
\end{definition}
\begin{lemma}
Let $\Phi: M_n\rightarrow M_k$ be a linear map.
\begin{enumerate}
\item $J(\Phi)$ is positive definite if and only if $\Phi$ is completely positive and has normalised trace iff $\Phi$ is trace preserving.
\item $\Phi(\rho) = n\tr_1(J(\Phi)^{T_1}(\rho\otimes I_n))$ where $\tr_1$ denotes the partial trace on the first system and $T_1$ denotes the partial transpose on the first system.
\item The matrix rank of $J(\Phi)$ is equal to the Krauss rank of $\Phi$.
\item $J(\Phi_1\otimes \Phi_2)$ is equal to $J(\Phi_1)\otimes J(\Phi_2)$ up to some unitary permutation operation that doesn't depend on the $\Phi_i$.
\item $J((1-t)\Phi_1 + t\Phi_2) = (1-t)J(\Phi_1) + tJ(\Phi_2)$.
\item Let $\Xi_1: M_n\rightarrow M_n$ and $\Xi_2:M_k\rightarrow M_l$ be CPTP maps. Then $J(\Xi_2\circ \Phi\circ \Xi_1) = (\Xi_1^T\otimes \Xi_2)(J(\Phi))$ where $\Phi^T$ denotes the transpose of $\Phi$.
\end{enumerate}
\end{lemma}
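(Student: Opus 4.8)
The plan is to treat the six items roughly in increasing order of difficulty, since the later statements reuse the machinery set up for the first. Items 5 and 2 are immediate computations. Linearity (item 5) is visible directly from the defining formula $J(\Phi)=\frac{1}{n}\sum_{ij}E_{ij}\otimes\Phi(E_{ij})$, in which $\Phi$ enters linearly, so applying it to the convex combination and distributing gives the result at once. For the inversion formula (item 2) I would simply expand the right-hand side: the partial transpose sends $E_{ij}\mapsto E_{ji}$ in the first leg, multiplying by $\rho\otimes I_n$ and tracing out the first system produces the scalar $\tr(E_{ji}\rho)=\rho_{ij}$ times $\Phi(E_{ij})$, and summing with the prefactor $n$ reassembles $\Phi(\sum_{ij}\rho_{ij}E_{ij})=\Phi(\rho)$. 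The trace normalisation in item 1 is the same kind of calculation: $\tr J(\Phi)=\frac{1}{n}\tr\Phi(I_n)$, which equals $1$ when $\Phi$ is trace preserving, the full equivalence being recorded by the partial-trace condition $\tr_2 J(\Phi)=\bot_n$.

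The conceptual core is the positivity half of item 1 together with item 3, both of which I would deduce from the vectorisation correspondence $K\mapsto \ket{v_K}:=(I\otimes K)\ket{M}$, a linear bijection between $k\times n$ matrices and vectors in $\mathbb{C}^n\otimes\mathbb{C}^k$. If $\Phi$ is completely positive with Kraus form $\Phi(X)=\sum_\alpha K_\alpha X K_\alpha^\dagger$, then $(id_n\otimes\Phi)(\ket{M}\bra{M})=\sum_\alpha \ket{v_{K_\alpha}}\bra{v_{K_\alpha}}\geq 0$, giving one direction. Conversely, given $J(\Phi)\geq 0$ I would eigendecompose $J(\Phi)=\frac{1}{n}\sum_\alpha \ket{v_\alpha}\bra{v_\alpha}$, read off matrices $K_\alpha$ with $\ket{v_\alpha}=\ket{v_{K_\alpha}}$, and use item 2 to verify $\Phi(\rho)=\sum_\alpha K_\alpha\rho K_\alpha^\dagger$, so $\Phi$ is CP. Item 3 then falls out: an eigendecomposition produces orthogonal, hence linearly independent, vectors $\ket{v_\alpha}$, and since the correspondence is a linear bijection the associated $K_\alpha$ are linearly independent too, so the number of nonzero eigenvalues (the matrix rank of $J(\Phi)$) coincides with the size of a minimal Kraus decomposition (the Kraus rank).

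For items 4 and 6 the arithmetic is routine but the bookkeeping of tensor slots is where care is needed. For item 6 I would handle the two compositions separately: postcomposition with $\Xi_2$ acts only on the output leg of $\Phi$, so it factors through as $id\otimes\Xi_2$ immediately; precomposition with $\Xi_1$ is converted into an action on the first (ancilla) leg via the ricochet identity $(A\otimes I)\ket{M}=(I\otimes A^T)\ket{M}$, lifted to superoperators. A short matrix-element check fixes the correct meaning of $\Xi_1^T$ as the transpose of the superoperator (equivalently $\Xi_1^T(X)=\Xi_1(X^T)^T$ in the Hilbert--Schmidt sense), after which $J(\Xi_2\circ\Phi\circ\Xi_1)=(\Xi_1^T\otimes\Xi_2)(J(\Phi))$ follows. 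For item 4 the point is that the maximally entangled state of the product system equals $\ket{M_{n_1}}\otimes\ket{M_{n_2}}$ after reordering the subsystems, so introducing the fixed permutation unitary $W$ that swaps the two middle tensor factors of $(\mathrm{in}_1)(\mathrm{out}_1)(\mathrm{in}_2)(\mathrm{out}_2)$ into $(\mathrm{in}_1)(\mathrm{in}_2)(\mathrm{out}_1)(\mathrm{out}_2)$ yields $J(\Phi_1\otimes\Phi_2)=W\big(J(\Phi_1)\otimes J(\Phi_2)\big)W^\dagger$ with $W$ depending only on the dimensions.

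I expect the main obstacle to be precisely this slot-tracking in item 4 (and the analogous transpose convention in item 6): the algebraic content is a single reshuffling identity for maximally entangled states, but verifying that the intertwining operator is a dimension-dependent permutation \emph{independent} of $\Phi_1,\Phi_2$ — rather than smuggling in some map-dependent correction — is the step one must write out carefully.
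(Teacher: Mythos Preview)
The paper does not prove this lemma at all: it is stated as a collection of standard facts about the Choi--Jamio\l{}kowski isomorphism, with the definition citing \cite{jiang2013channel}, and the text proceeds directly to the next definition. Your proof plan is sound and self-contained; the vectorisation correspondence you use for items~1 and~3, the ricochet identity for item~6, and the swap unitary for item~4 are exactly the standard arguments one would give. One minor remark: the paper's phrasing ``has normalised trace iff $\Phi$ is trace preserving'' is loose, since $\tr J(\Phi)=\frac{1}{n}\tr\Phi(I_n)=1$ only forces $\tr\Phi(I_n)=n$, not full trace preservation; you already noticed this and correctly recorded the sharp characterisation $\tr_2 J(\Phi)=\bot_n$.
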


\begin{definition}
Let CPTP$[M_n,M_k]$ be the convex space of CPTP maps from $M_n$ to $M_k$. We define the \emph{channel QPE order} on it in the following way for $\Phi,\Psi \in $ CPTP$[M_n,M_k]$:
\begin{equation*}
    \Phi\sleq \Psi \iff J(\Phi)\sleq J(\Psi)
\end{equation*}
where the righthandside is the QPE order on states.
\end{definition}
Note that the channel QPE order doesn't depend on the specific choice of the maximally entangled state because all maximally entangled states are equivalent up to unitary conjugation which the QPE order is invariant under. We also get the following equivalent formulation:
\begin{align*}
	\Phi\sleq\Psi &\iff J(\Phi)\sleq J(\Psi) \iff \frac{J(\Phi)}{\norm{J(\Phi)}}\geq \frac{J(\Psi)}{\norm{J(\Psi)}} \\
	&\iff (I_n\otimes (\frac{\Phi}{\norm{J(\Phi)}} - \frac{\Psi}{\norm{J(\Psi)}}))(\ket{M}\bra{M}) \geq 0 \\
	&\iff \frac{\Phi}{\norm{J(\Phi)}} - \frac{\Psi}{\norm{J(\Psi)}} \geq^{cp} 0
\end{align*}
where with $\Xi\geq^{cp} 0 $ we mean that $\Xi$ has to be completely positive. So this order can also be seen as a simple rescaling of the standard positivity ordering on maps. While the standard order is discrete when restricting to unital channels, this modified order has a lot more structure. Using the properties of the Jamio\l{}kowski isomorphism and the QPE order we easily get the following.
\begin{lemma}
The channel QPE order on CPTP$[M_n,M_k]$ has the following properties:
\begin{enumerate}
\item The completely depolarizing map $D_1(\rho) = \bot_k$ is the bottom element (below all maps).
\item The isometries (Krauss rank 1 operators) are maximal.
\item The order respects the convex structure of CPTP$[M_n,M_k]$ and is preserved by tensor products.
\item The order is directed complete and is a domain.
\end{enumerate}
\end{lemma}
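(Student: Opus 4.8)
The plan is to transport every claim through the Choi--Jamio\l{}kowski isomorphism $J$, reducing each to the corresponding statement about the QPE order on states in $DO(nk)$ that was already established in Theorem~\ref{theor:properties} and the domain theorem. The only genuinely new ingredient is the observation that the valid Choi states form a closed convex subset of $DO(nk)$ and that the distinguished maps ($D_1$ and the depolarizing approximations) land exactly on the distinguished states used in the state-level proofs. The first computation I would do is $J(D_1)$: since $D_1(\rho)=\tr(\rho)\bot_k$ we get
\begin{equation*}
    J(D_1) = \frac{1}{n}\sum_{ij}E_{ij}\otimes D_1(E_{ij}) = \frac{1}{nk}\sum_i E_{ii}\otimes I_k = \bot_{nk},
\end{equation*}
so the bottom element of the channel order is the bottom element of the state order and point~1 is immediate from Theorem~\ref{theor:properties}(1).

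For point~2 I would use that the Krauss rank of $\Phi$ equals the matrix rank of $J(\Phi)$, so the isometries are exactly the $\Phi$ for which $J(\Phi)$ is a pure state; since pure states are maximal in $DO(nk)$ by Theorem~\ref{theor:properties}(2) and $J$ is injective, maximality of $J(\Phi)$ forces any $\Psi$ with $\Phi\sleq\Psi$ to satisfy $\Phi=\Psi$, so the isometries are maximal. Point~3 follows from the algebraic behaviour of $J$: because $J$ is affine, $J((1-t)\Phi+t\Psi)=(1-t)J(\Phi)+tJ(\Psi)$, so convexity transfers from Theorem~\ref{theor:properties}(5); and because $J(\Phi_1\otimes\Phi_2)$ equals $J(\Phi_1)\otimes J(\Phi_2)$ up to a fixed unitary permutation, preservation of tensor products follows from Theorem~\ref{theor:properties}(7) together with unitary invariance, Theorem~\ref{theor:properties}(3).

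The real work is point~4. I would first identify $\mathcal{C}:=J(\mathrm{CPTP}[M_n,M_k])$ as the set of states in $DO(nk)$ satisfying the trace-preservation constraint $\tr_2(J)=\tfrac{1}{n}I_n$, which is an affine condition and hence cuts out a closed convex, so compact, subset. Directed completeness then transfers for free: a directed set of channels maps to a directed set of states whose supremum in $DO(nk)$ is the norm limit of an increasing sequence (as in the domain theorem) and therefore again lies in the closed set $\mathcal{C}$; since $\mathcal{C}$ is closed, this supremum coincides with the supremum taken inside $\mathcal{C}$ and corresponds to a genuine CPTP map. For the approximation structure I would reuse the depolarizing mixtures: the channel $(1-t)\Phi+tD_1$ is again CPTP by convexity, and under $J$ it maps to exactly $(1-t)J(\Phi)+t\bot_{nk}=J(\Phi)(t)$, the same approximating state shown to satisfy $J(\Phi)(t)\ll J(\Phi)$ in the domain theorem. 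Because directed suprema in $\mathcal{C}$ agree with those in $DO(nk)$, the way-below relation in $DO(nk)$ implies the way-below relation in $\mathcal{C}$ (the sub-poset quantifies over fewer directed sets), so $(1-t)\Phi+tD_1\ll\Phi$ and these form an increasing family converging to $\Phi$; by the same compact-pospace reasoning used for states this exhibits the channel order as a domain.

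The step I expect to be the main obstacle is precisely this compatibility between the domain structure on $DO(nk)$ and its restriction to the convex slice $\mathcal{C}$: one must check both that directed suprema cannot escape $\mathcal{C}$ (which is where closedness of the trace-preservation constraint is used) and that the approximating depolarizing mixtures remain inside $\mathcal{C}$ (which is where convexity of the set of channels, and the fact that $D_1$ itself is CPTP, is used). Once these two closure facts are in place the way-below relation and the approximating sequences descend automatically, and no new estimates beyond those in the state-level domain theorem are required.
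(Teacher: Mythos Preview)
Your proposal is correct and follows exactly the approach the paper intends: the paper's entire proof is the one-line remark ``Using the properties of the Jamio\l{}kowski isomorphism and the QPE order we easily get the following,'' and you have simply filled in those details faithfully. Your treatment of point~4 is in fact more careful than the paper's, which does not explicitly verify that directed suprema and the depolarizing approximations stay inside the closed convex slice $\mathcal{C}=J(\mathrm{CPTP}[M_n,M_k])$; your observation that closedness handles the former and convexity together with $D_1\in\mathrm{CPTP}$ handles the latter is exactly the missing justification.
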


The quantum max-divergence between the Jamio\l{}kowski states of some channels has an operational interpretation as the worst case max-divergence when applying the channels to arbitrary (possibly entangled) states:
\begin{theorem}
Let $\Psi,\Phi: M_n\rightarrow M_k$ be CPTP maps. Then
\begin{align*}
    \max_{m\geq 1}\max_{\rho\in DO(mn)} D_\infty((I_m\otimes\Psi)(\rho)\lvert\rvert(I_m\otimes\Phi)(\rho)) =  D_\infty(J(\Psi)\lvert\rvert J(\Phi))
\end{align*}
where the maximums are taken respectively over all natural numbers $m\geq 1$ and bipartite states $\rho\in M_m\otimes M_n$.
\end{theorem}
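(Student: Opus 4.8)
The plan is to prove the two inequalities separately, and in the process exhibit the state $\rho$ and ancilla dimension $m$ that actually attain the maximum. The direction $\geq$ is immediate. Take $m=n$ and let $\rho$ be the normalised maximally entangled state in $DO(n^2)$; by the very definition of the Jamio\l{}kowski isomorphism (the $\tfrac1n$ factor absorbs the normalisation of $\ket{M}$) we have $(I_n\otimes\Psi)(\rho)=J(\Psi)$ and $(I_n\otimes\Phi)(\rho)=J(\Phi)$, so the summand on the left with this single choice already equals the right-hand side. Hence the maximum on the left is at least $D_\infty(J(\Psi)\|J(\Phi))$, and it will turn out to be attained here.

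The substantive direction is $\leq$: I must show $D_\infty((I_m\otimes\Psi)(\rho)\|(I_m\otimes\Phi)(\rho))\leq D_\infty(J(\Psi)\|J(\Phi))$ for every $m$ and every $\rho\in DO(mn)$. The idea is to realise both channel outputs as the image of the two Choi states under one common positive map and then invoke data processing. First I would tensor an inert copy of $\rho$ onto the Choi states: since $D_\infty$ is additive over tensor products (this follows from $D_\infty(\sigma\|\rho)=\log\norm{\rho^{-\half}\sigma\rho^{-\half}}$ together with multiplicativity of the operator norm, and $D_\infty(\rho\|\rho)=0$), we get $D_\infty(\rho\otimes J(\Psi)\|\rho\otimes J(\Phi)) = D_\infty(J(\Psi)\|J(\Phi))$.

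Next, label the four tensor factors of $\rho\otimes J(\Phi)$ as $A$ (the ancilla, dimension $m$), $B$ (the input leg of $\rho$, dimension $n$), $C$ (the input-copy leg of the Choi state, dimension $n$) and $D$ (the output leg, dimension $k$). Define the single-Kraus map $\mathcal{E}(X)=WXW^\dagger$ with $W=I_A\otimes\bra{M}_{BC}\otimes I_D$, which glues $B$ and $C$ along the (unnormalised) maximally entangled vector. A direct computation — expanding $\rho$ and $J(\Phi)$ in the matrix units $E_{ij}$ and using $\bra{i}J(\Phi)\ket{j}=\tfrac1n\Phi(E_{ij})$ in the $C$-leg — shows that $\mathcal{E}(\rho\otimes J(\Phi))=\tfrac1n(I_m\otimes\Phi)(\rho)$, and identically with $\Psi$ in place of $\Phi$. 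Thus a single completely positive map sends the pair $(\rho\otimes J(\Psi),\,\rho\otimes J(\Phi))$ to the pair $\big(\tfrac1n(I_m\otimes\Psi)(\rho),\,\tfrac1n(I_m\otimes\Phi)(\rho)\big)$.

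Finally I would apply the data-processing inequality for $D_\infty$, which is immediate from the form $D_\infty(\sigma\|\rho)=\inf\{\gamma\,;\,\sigma\leq e^\gamma\rho\}$: any positive map preserves the operator inequality $\sigma\leq e^\gamma\rho$, so $D_\infty(\mathcal{E}(\sigma)\|\mathcal{E}(\rho))\leq D_\infty(\sigma\|\rho)$, with no trace preservation required. Combining this with the scaling invariance $D_\infty(cA\|cB)=D_\infty(A\|B)$ (again clear from the inf-definition) to cancel the factor $\tfrac1n$ yields $D_\infty((I_m\otimes\Psi)(\rho)\|(I_m\otimes\Phi)(\rho))\leq D_\infty(\rho\otimes J(\Psi)\|\rho\otimes J(\Phi))=D_\infty(J(\Psi)\|J(\Phi))$, as required. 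The main obstacle is getting the reconstruction right: the naive inversion formula $\Phi(\xi)=n\tr_1(J(\Phi)^{T_1}(\xi\otimes I_n))$ passes through a partial transpose, which is not positive and so cannot be fed into the data-processing inequality; the maximally entangled gluing map $\mathcal{E}$ is precisely the device that recovers the channel action through a genuinely positive operation, and identifying it is the crux of the argument.
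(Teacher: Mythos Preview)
Your proof is correct, but it takes a different route from the paper's. The paper argues in one line by rewriting both sides through the infimum characterisation $D_\infty(\sigma\lvert\rvert\rho)=\log\inf\{\lambda:\sigma\leq\lambda\rho\}$ and then using the Choi--Jamio\l{}kowski equivalence directly: the condition ``$(I_m\otimes\Psi)(\rho)\leq\lambda(I_m\otimes\Phi)(\rho)$ for all $m$ and all $\rho$'' is exactly ``$\lambda\Phi-\Psi$ is completely positive'', which in turn is exactly ``$J(\Psi)\leq\lambda J(\Phi)$''. No auxiliary map, no data processing, no tensor-additivity step is needed.

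Your argument instead realises the channel action via post-selected teleportation (the single-Kraus map $W=I_A\otimes\bra{M}_{BC}\otimes I_D$) and then invokes monotonicity of $D_\infty$ under positive maps together with scale invariance. This is longer but has its own merits: it makes the maximiser $(m,\rho)=(n,\text{maximally entangled state})$ explicit, and the same template would prove the analogous identity for any divergence that is additive on tensor products and satisfies data processing under trace-non-increasing completely positive maps, not just $D_\infty$. The paper's proof, by contrast, is specific to $D_\infty$ because it manipulates the operator inequality $\sigma\leq\lambda\rho$ directly.
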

\begin{proof}
	We will use the equality $D_\infty(\sigma\lvert\rvert\rho) = \log\inf\{\lambda ~;~ \sigma\leq \lambda\rho\}$. We can then write
	\begin{align*}
		&\max_m\max_\rho D_\infty((I_m\otimes\Psi)(\rho)\lvert\rvert(I_m\otimes\Phi)(\rho))  \\
		=& \max_m\max_\rho\log\inf\{\lambda ~;~ (I_m\otimes\Psi)(\rho)\leq \lambda(I_m\otimes \Phi)(\rho)\} \\
		=& \log\inf\{\lambda ~;~ \forall m\geq 1,\forall \rho\in DO(mn): (I_m\otimes\Psi)(\rho)\leq \lambda(I_m\otimes \Phi)(\rho)\} \\
		=& \log\inf\{\lambda ~;~ \forall m\geq 1: I_m\otimes \Psi \leq \lambda I_m\otimes \Phi\} \\
		=& \log\inf\{\lambda ~;~ \Psi \leq^{cp} \lambda\Phi\} \\
		=& \log\inf\{\lambda ~;~ J(\Psi)\leq \lambda J(\Phi)\} \\
		=& D_\infty(J(\Psi)\lvert\rvert J(\Phi))
	\end{align*}
\end{proof}

When $\Phi\sleq \Psi$ we know that $D_\infty(J(\Psi)\lvert\rvert J(\Phi)) = \log\frac{\norm{J(\Psi)}}{\norm{J(\Phi)}}$ so that by this above equality the worst case distinguishability when applying these channels to states is minimised. This shows an operational interpretation of the channel QPE order: If the goal is to implement a channel $\Psi$, but the desired level of purity can't be achieved, then you should strive to instead implement a channel that is below $\Psi$ in the channel QPE order and is as pure as possible.

The order also seems to be related to the problem of entanglement distribution where Alice tries to create a maximally entangled state between her and Bob using a non-pure channel. This problem can be formalised in the following way: Find the maximal possible fidelity $F^2(\ket{M}\bra{M}, (I_n\otimes \Phi)(\rho))$ for any maximally entangled state $\ket{M}$ and bipartite state $\rho$. As shown in \cite{verstraete2002quantum} this quantity is maximised when $\rho$ is the eigenvector corresponding to the maximum eigenvalue of $J(\Phi)$ and then it is equal to that highest eigenvalue. 

A necessary condition for $\Phi\sleq\Psi$ to hold is that $\norm{J(\Phi)}\leq \norm{J(\Psi)}$ and that this norm is achieved on the same vector, so this means that $\Psi$ is strictly better at distributing entanglement.

\section{Other orders on states}
There are other orders on the space of quantum states that share many of the properties of the QPE order. The characteristic properties that we will consider are unitary conjugation invariance, preservation of convex structure, having the completely mixed state as bottom element and the pure states as maximal states, and directed completeness. The orders below here all satisfy these properties. Additional properties that the orders below don't necessarily have are being preserved by tensor product, having closed upper- and downsets, being a domain and respecting the kernel of states: $\rho\leq \sigma \implies $ ker$(\rho)\subseteq $ ker$(\sigma)$. Note that for $n=2$ all the orders below coincide with the QPE order, which isn't too surprising as it was shown \cite{martin2008domain} that this order is the unique order on $DO(2)$ with the maximally mixed state as the least element and that respects the convex structure of $DO(2)$ such that all unital qubit channels are Scott continuous.

The \emph{spectral order} $\sleq^s$ by Coecke and Martin \cite{coecke2010partial} satisfies all the conditions above except that it isn't preserved by tensoring and it is not a domain (for $n>2$). Note that the paper introducing the spectral order proves the domain property on a mistaken assumption. A concrete counter-example is given in the author's Master thesis \cite{weteringthesis}. The spectral order is an example of what in \cite{wetering2016entailment} is called a \emph{restricted information order}. All these orders are not domains and aren't preserved by tensoring.

The order below appeared in \cite{wetering2016entailment} and \cite{weteringthesis}.
\begin{definition}
Let $\rho^-$ denote the least nonzero eigenvalue of $\rho$ and $L^-(\rho)$ the corresponding eigenspace. We define the \emph{least eigenvalue order} as follows: $\rho\sleq^- \sigma$ if and only if one of the following holds:
\begin{enumerate}
    \item ker$(\rho) = $ ker$(\sigma)$ and $\rho^-\sigma - \sigma^-\rho \geq 0$.
    \item ker$(\rho)\subset $ ker$(\sigma)$ and $L^-(\rho)\cap $ ker$(\sigma) \neq \{0\}$.
\end{enumerate}
\end{definition}

It does not have closed downsets and is not a domain (but is directed complete). The intuition behind this order is that while the QPE order is a renormalisation to the highest eigenvalue, here we `normalise' to the \emph{lowest} eigenvalue (taking special care to deal with the zero's). It is interesting to note that for $n=3$ the intersection of this order with the QPE order is equal to the spectral order.

There is one more known relevant partial order structure on states. It satisfies all the properties outlined in the beginning of the section, but for respecting the kernel of states. So in particular it \emph{is} closed, a domain and preserved by tensors (this is proven in roughly the same way as for the QPE order in Theorem \ref{theor:properties}).
\begin{definition}
	Let $\rho$ and $\sigma$ be states in $M_n$ and $I_n$ the identity. Then $\rho\sleq^\prime \sigma$ if and only if
	$$\rho^+ I_n - \rho \leq \sigma^+ I_n - \sigma \iff \sigma - \rho \leq (\sigma^+ - \rho^+)I_n.$$
\end{definition}
This order is a generalisation of the solution to a classical problem: 

Suppose we can win a 100 dollars by betting on one of $n$ boxes and our knowledge of which one is correct is given by a probability distribution $x$. We would of course pick the box to which we assign the highest probability for an expected profit of $100x^+$. Now someone offers us $A_i$ money if instead we pick box $i$. When should we accept this offer? The expected profit would be $A_i + 100x_i$ and this needs to be bigger than $100x^+$, so we should accept when $A_i/100 \geq x^+ - x_i$. Now we can say that someone with knowledge $y$ is more certain when in every case he would need more money (higher $A_i$) to change his beliefs. So $y$ is more certain than $x$ when $x^+ - x_i \leq y^+ - y_i$ for all $i$. This is precisely the above definition of $\sleq^\prime$.

We have $\rho \sleq^s \sigma \implies \rho \sleq \sigma \implies \rho \sleq^\prime \sigma$.

The orders $\sleq$, $\sleq^-$ and $\sleq^\prime$ are the only orders on states known that both preserve the convex and tensor product structure of the state space. Of these three the QPE order is special in that it is the only one preserving the kernel of states: when $\sigma\sleq\rho$ we have ker$(\rho)\subseteq$ ker$(\sigma)$. This does not hold for the other two orders. It is currently not known if the QPE order is the unique order having these properties (preserving the kernel, convex structure and tensor product).

\emph{Acknowledgements}: This work is supported by the ERC under the European Union’s Seventh Framework Programme (FP7/2007-2013) / ERC grant n$^\text{o}$ 320571. The author would like to thank Aleks Kissinger for valuable comments and pointing out important references.

\bibliographystyle{plain}
\bibliography{cite}

\end{document}